\theoremstyle{plain}
\numberwithin{equation}{section}
\newtheorem{thm}{Theorem}[section]
\newtheorem{lem}[thm]{Lemma}
\newtheorem{cor}[thm]{Corollary}
\newenvironment{exam}[1]{\medskip          % hardwire numbers
  \setlength{\rightmargin}{\leftmargin}
  {\noindent\textbf{Example #1}.\enspace}%
  }%
\newcommand{\complex}{{\mathbb C}}
\newcommand{\positive}{{\mathbb N}}
\newcommand{\real}{{\mathbb R}}
\newcommand{\ascript}{{\mathcal A}}
\newcommand{\bscript}{{\mathcal B}}
\newcommand{\cscript}{{\mathcal C}}
\newcommand{\escript}{{\mathcal E}}
\newcommand{\lscript}{{\mathcal L}}
\newcommand{\uscript}{{\mathcal U}}
\newcommand{\rmre}{\mathrm{Re\,}}
\newcommand{\rmtr}{\mathrm{tr}}
\newcommand{\rmspan}{\mathrm{span}}
\newcommand{\rmnull}{\mathrm{Null}}
\newcommand{\rmrank}{\mathrm{rank}}
\newcommand{\rmrange}{\mathrm{range}}
\newcommand{\rmcyl}{\mathrm{cyl}}
\newcommand{\fhat}{\widehat{f}}
\newcommand{\ghat}{\widehat{g}}
\newcommand{\hhat}{\widehat{h}}
\newcommand{\muhat}{\widehat{\mu}}
\newcommand{\chihat}{\widehat{\chi}}
\newcommand{\ab}[1]{\left|#1\right|}
\newcommand{\brac}[1]{\left\{#1\right\}}
\newcommand{\paren}[1]{\left(#1\right)}
\newcommand{\sqbrac}[1]{\left[#1\right]}
\newcommand{\elbows}[1]{{\left\langle#1\right\rangle}}
\newcommand{\ket}[1]{{\left|#1\right>}}
\newcommand{\bra}[1]{{\left<#1\right|}}
\def\REMARK{\noindent {\bf Remark \ }}
\def\journaldata#1#2#3#4{{\it #1\/}\phantom{--}{\bf #2$\,$:} $\!$#3 (#4)}
\begin{document}

\title{TWO-SITE\\QUANTUM RANDOM WALK
}
\author{S. Gudder\\ Department of Mathematics\\
University of Denver\\ Denver, Colorado 80208, U.S.A.\\
sgudder@math.du.edu\\
and\\
Rafael D. Sorkin\\
Perimeter Institute for Theoretical Physics\\
Waterloo, Ontario N2L 2Y5 Canada\\
rsorkin@perimeterinstitute.ca
}
\date{}
\maketitle

\begin{abstract}
We study the measure theory of a two-site quantum random walk.  The
truncated decoherence functional defines a quantum measure $\mu_n$ on
the space of $n$-paths, and the $\mu_n$ in turn induce a quantum measure
$\mu$ on the cylinder sets within the space $\Omega$ of untruncated
paths.  Although $\mu$ cannot be extended to a continuous quantum
measure on the full $\sigma$-algebra generated by the cylinder sets, an
important question is whether it can be extended to sufficiently many
physically relevant subsets of $\Omega$ in a systematic way. We begin an
investigation of this problem by showing that $\mu$ can be extended to a
quantum measure on a ``quadratic algebra'' of subsets of $\Omega$ that
properly contains the cylinder sets.  We also present a new
characterization of the quantum integral on the $n$-path space.
\end{abstract}

\section{Introduction}  % Section 1
A two-site quantum random walk is a process that describes the motion of
a quantum particle that occupies one of two sites $0$ and $1$. We assume
that the particle begins at site $0$ at time $t=0$ and either remains at
its present site or moves to the other site at discrete time steps
$t=0,1,2,\ldots\,$. The transition amplitude is given by the unitary
matrix
\begin{equation*}
U=\frac{1}{\sqrt{2\,}}\,
\left[\begin{matrix}\noalign{\smallskip}1&i\\\noalign{\smallskip}i&1\\\noalign{\smallskip}\end{matrix}\right]
\end{equation*}
Thus, the amplitude that the particle remains at its present position at
one time step is $1/\sqrt{2\,}$ and the amplitude that it changes
positions at one time step is $i/\sqrt{2\,}$. We can also interpret this
process as a quantum coin for which 0 and 1 are replaced by T (tails)
and H (heads), respectively.

This two-site process is a special case of a finite unitary system
\cite{djsu10} in which more than two sites are considered. Although we
present a special case we study the process in much greater detail, 
which
we believe gives more insight into the situation. We expect that some
of the work presented here will generalize to finite unitary systems.
Moreover the methods employed will be general enough to cover
non-unitary processes, which are ubiquitous for open systems and which
plausibly include the case of quantum gravity (cf. \cite{mocs05}).
We believe this greater generality is instructive, although it does
lengthen the derivations in some instances. 

Unlike previous studies of quantum random walks, the present work
emphasizes aspects of quantum measure theory \cite {gt0906, gud09,
  gud101, mocs05, sor94, sor071}. We begin by introducing the
$n$-truncated decoherence functional $D_n$ on the $n$-path space $\Omega
_n$ corresponding to $U$. The functional $D_n$ is then employed to
define a quantum measure $\mu _n$ on the events in $\Omega _n$. We then
use $\mu _n$ to define a quantum measure $\mu$ on the algebra of
cylinder sets $\cscript (\Omega )$ for the path space $\Omega$. Although
$\mu$ cannot be extended to a continuous quantum measure on the
$\sigma$-algebra generated by $\cscript (\Omega )$ \cite{djsu10}, an
important problem is whether $\mu$ can be extended to other physically
relevant sets in a systematic way. We begin an investigation of this
problem by introducing the concept of a quadratic algebra of sets. It is
shown that $\mu$ extends to a quantum measure on a quadratic algebra
that properly contains $\cscript (\Omega )$. 

We also consider a quantum integral with respect to $\mu_n$ of random
variables 
(real-valued functions)
on $\Omega _n$ \cite{gud09}. A new characterization of the
quantum integral $\int fd\mu _n$ is presented. It is shown that a random
variable 
$f\colon\Omega _n\to\real$ corresponds to a self-adjoint operator
$\fhat$ on a $2^n$-dimensional Hilbert space 
$H_n$ such that
\begin{equation*}
\int fd\mu _n=\rmtr (\fhat D^n)
\end{equation*}
where $D^n$ is a density operator on $H_n$ corresponding to $D_n$.

\section{Truncated Decoherence Functional} % Section 2
The sample space (or ``history-space'')
$\Omega$ consists of all sequences of zeros and ones beginning with zero. For example
$\omega\in\Omega$ with
\begin{equation*}
\omega = 0110110\cdots
\end{equation*}
We call the elements of $\Omega$ \textit{paths}. A finite string
\begin{equation*}
  \omega =\alpha _0\alpha _1\cdots\alpha _n,\ \alpha _i\in\brac{0,1},\ \alpha _0=0
\end{equation*}
is called an $n$-\textit{path}. 
If
\begin{equation*}
  \omega '=\alpha '_0\alpha '_1\cdots\alpha '_n
\end{equation*}
is another $n$-path, the 
\textit{joint amplitude} or ``Schwinger amplitude'' 
%% \textit{decoherence} 
$D^n(\omega ,\omega ')$ between $\omega$ and $\omega '$ is
\begin{equation}         % equation (2.1)
  \label{eq21}
  D^n(\omega ,\omega ')=\tfrac{1}{2^n}\,i^{\ab{\alpha _1-\alpha _0}}\cdots i^{\ab{\alpha _n-\alpha _{n-1}}}
  i^{-\ab{\alpha '_1-\alpha '_0}}\cdots i^{-\ab{\alpha '_n-\alpha '_{n-1}}}\delta _{\alpha _n\alpha '_n}
\end{equation}
We call the set $\Omega_n$ of $n$-paths the $n$-\textit{path space} and write
\begin{equation*}
  \Omega _n=\brac{\omega _0,\omega _1,\ldots\,,\omega _{2^n-1}}
\end{equation*}
where 
$\omega_0=0\cdots 0$, 
$\omega_1=0\cdots 01$, 
$\omega_2=0\cdots 010$, 
$\ldots$,
$\omega_{2^n-1}=011\cdots 1$. 
Thus, $\omega_i=i$ in binary notation, $i=0,1,\ldots ,2^n-1$ and we can write
$\Omega _n=\brac{0,1,\ldots ,2^n-1}$. The $n$-\textit{truncated decoherence matrix}
(or $n$-\textit{decoherence matrix}, for short) is the $2^n\times 2^n$ matrix $D^n$ given by
\begin{equation*}
D_{ij}^n=D^n(\omega _i,\omega _j)=D^n(i,j)
\end{equation*}
The algebra of subsets of $\Omega _n$ is denoted by $\ascript _n$ or $2^{\Omega _n}$. The
$n$-\textit{decoherence functional} $D_n\colon\ascript _n\times\ascript _n\to\complex$ is defined by
\begin{equation*}
D_n(A,B)=\sum\brac{D_{ij}^n\colon\omega _i\in A,\omega _j\in B}
\end{equation*}
The $n$-\textit{truncated} $q$-\textit{measure} $\mu _n\colon\ascript _n\to\real ^+$ is defined by
$\mu _n(A)=D_n(A,A)$ (we shall shortly show that $\mu _n(A)\ge 0$ for all $A\in\ascript _n$). We then have
\begin{equation*}
\mu _n(A)=\sum\brac{D_{ij}^n\colon\omega _i,\omega _j\in A}
\end{equation*}
For $i,j=0,1,\ldots ,2^n-1$ $i\ne j$ we define the \textit{interference term}
\begin{equation*}
  I_{ij}^n =  \mu_n\paren{\brac{\omega_i,\omega_j}} - \mu_n(\omega_i) - \mu _n(\omega_j)
%% RDS: I_{ij}^n=\tfrac{1}{2}\sqbrac{\mu _n\paren{\brac{\omega _i,\omega _j}}-\mu _n(\omega _i)-\mu _n(\omega _j)}
\end{equation*}
Since
\begin{align*}
\mu _n\paren{\brac{\omega _i,\omega _j}}&=D_n\paren{\brac{\omega _i,\omega _j},\brac{\omega _i,\omega _j}}
  =D_{ii}^n+D_{jj}^n+2\rmre D_{ij}^n\\
  &=\mu _n(\omega _i)+\mu _n(\omega _j)+2\rmre D_{ij}^n
\end{align*}
we have that
\begin{equation*}
   I_{ij}^n = 2 \rmre D_{i,j}^n = D_n(\omega_i,\omega_j)+D_n(\omega_j,\omega_i)
%% RDS:    I_{ij}^n = \rmre D_{i,j}^n = \rmre D_n(\omega_i,\omega_j)
\end{equation*}
If $I_{ij}^n=0$ we say that $i$ and $j$ \textit{do not interfere}\footnote%
% Sorkin footnote #1
{In some contexts, the stronger condition $D_n(\omega_i,\omega_j)=0$
  would be more appropriate.}
and we write $i \, n \, j$; 
if $I_{ij}^n>0$, then $i$ and $j$ \textit{interfere constructively} 
and we write $i\,c\,j$; 
if $I_{ij}^n<0$, then $i$ and $j$ \textit{interfere destructively} 
and we write $i\,d\,j$.

\begin{exam}{1}                 % Example 1 - hardwired numbers
For $n=1$, $\Omega _1=\brac{00,01}$ and
\begin{equation*}
D^1=\frac{1}{2}\,
\left[\begin{matrix}\noalign{\smallskip}1&0\\\noalign{\smallskip}0&1\\\noalign{\smallskip}\end{matrix}\right]
\end{equation*}
We have $\mu _1(\emptyset )=0$, $\mu _1(\omega _0)=\mu _1(\omega _1)=1/2$, $\mu _1(\Omega _1)=1$. There is no interference and $\mu _1$ is a measure
\end{exam}

\begin{exam}{2}                    % Example 2
For $n=1$, $\Omega _2=\brac{000,001,010,011}=\brac{0,1,2,3}$ and
\begin{equation*}
D^2=\frac{1}{4}\,
\left[\begin{matrix}\noalign{\smallskip}1&0&-1&0\\\noalign{\smallskip}
  0&1&0&1\\\noalign{\smallskip}-1&0&1&0\\\noalign{\smallskip}0&1&0&1\\\noalign{\smallskip}
  \end{matrix}\right]
\end{equation*}
We have $\mu _1(\emptyset )=0$, $\mu _2(i)=1/4$, $i=0,1,2,3$, $\mu\paren{\brac{0,2}}=0$
\begin{align*}
\mu _2\paren{\brac{0,1}}&=\mu _2\paren{\brac{0,3}}=\mu _2\paren{\brac{1,2}}=\mu _2\paren{\brac{2,3}}=1/2\\
\mu _2\paren{\brac{1,3}}&=1,\ \mu _2\paren{\brac{0,1,2}}=1/4\\
\mu _2\paren{\brac{0,1,3}}&=\mu _2\paren{\brac{1,2,3}}=5/4,\ \mu _2(\Omega _2)=1
\end{align*}
In this case there is interference and $\mu _2$ is not a measure. 
The interference terms are 
$I_{02}^2=-1/2$,
$I_{13}^2= 1/2$ and 
$I_{ij}^2=0$ for $i<j$, $(i,j)\ne (0,2),(1,3)$.
%% RDS: 
% $I_{02}^2=-1/4$,
% $I_{13}^2=1/4$ and 
% $I_{ij}^2=0$ for $i<j$, $(i,j)\ne (0,2),(1,3)$.

Let $c_n(\omega )$ be the number of position changes for an $n$-path $\omega$. For example, $c_4(01011)=3$
and $c_5(011010)=4$. If $\omega ,\omega '$ are $n$-paths it follows from \eqref{eq21} that
\begin{equation}         % equation (2.2)
\label{eq22}
D^n(\omega ,\omega ')=\tfrac{1}{2^n}\,i^{\sqbrac{c_n(\omega )-c_n(\omega ')}}\delta _{\alpha _n\alpha '_n}
\end{equation}
If two integers are both even or both odd, they have the \textit{same parity} and otherwise they have
\textit{different parity}. If $A=\sqbrac{a_{ij}}$ and $B=\sqbrac{b_{ij}}$ are $n\times x$ matrices, their
\textit{Hadamard product} $A\circ B$ is $\sqbrac{a_{ij}b_{ij}}$; that is, the $ij$-entry of $A\circ B$ is $a_{ij}b_{ij}$.
\end{exam}

\begin{thm}       % Theorem 2.1
\label{thm21}
If $D^n$ is the $n$-truncated decoherence matrix, then $D^n$ is positive
semi-definite, $D_{jk}^n=0$ if $j,k$ have different parity and if $j,k$
have the same parity then $D_{jk}^n=1/2^n$ when $c_n(j)=c_n(k)\pmod{4}$
and $D_{jk}^n=-1/2^n$ when $c_n(j)\ne c_n(k)\pmod{4}$. Moreover, $\sum
_{j,k}D_{jk}^n=1$. 
\end{thm}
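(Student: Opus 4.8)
The plan is to read everything off the closed form \eqref{eq22}, which already isolates the two features that drive the result: the factor $i^{\sqbrac{c_n(\omega)-c_n(\omega')}}$ and the Kronecker delta $\delta_{\alpha_n\alpha'_n}$ on the final sites. First I would record the one structural fact that ties the indexing to the dynamics: since $\omega_j=j$ in binary with a leading $0$, the last digit $\alpha_n$ of $\omega_j$ is exactly the least significant bit of $j$, so the integer parity of $j$ coincides with the final site $\alpha_n(\omega_j)$. Moreover, because the walk starts at $0$ and each position change flips the site, $\alpha_n(\omega)\equiv c_n(\omega)\pmod 2$. Consequently $\delta_{\alpha_n\alpha'_n}=1$ exactly when $j,k$ have equal parity, and in that case $c_n(j)-c_n(k)$ is even. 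The claim $D^n_{jk}=0$ for $j,k$ of different parity is then immediate from the vanishing of the delta factor.

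For the values on the same-parity blocks, \eqref{eq22} reduces to $D^n_{jk}=2^{-n}\,i^{c_n(j)-c_n(k)}$. Since the exponent is even, $i$ raised to it is $+1$ or $-1$ according to whether $c_n(j)-c_n(k)\equiv 0$ or $2\pmod 4$; translating back, this is $+2^{-n}$ when $c_n(j)\equiv c_n(k)\pmod 4$ and $-2^{-n}$ otherwise, which is the stated dichotomy. The only thing to be careful about here is the bookkeeping linking ``same parity of $j,k$'' to ``even difference of change counts'' to the residue mod $4$; Example~2 is a convenient sanity check.

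Positive semi-definiteness I would obtain by exhibiting $D^n$ as a Gram matrix. Writing $e_0,e_1$ for the standard basis of $\complex^2$ and setting $\phi_j=2^{-n/2}\,i^{c_n(j)}\,e_{\alpha_n(\omega_j)}$, a one-line computation using $i^{-c_n(k)}=\overline{i^{c_n(k)}}$ and $\elbows{e_{\alpha_n(\omega_j)},e_{\alpha_n(\omega_k)}}=\delta_{\alpha_n\alpha'_n}$ gives $\elbows{\phi_j,\phi_k}=D^n_{jk}$. A Gram matrix is positive semi-definite, so $D^n\ge 0$. (Equivalently, after grouping indices by final site, $D^n$ is block diagonal with two rank-one blocks of the form $2^{-n}ww^\ast$, each manifestly positive semi-definite.)

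Finally, for the normalization I would use that the total of all entries is a squared norm: $\sum_{j,k}D^n_{jk}=\elbows{\sum_j\phi_j,\sum_k\phi_k}=\ab{\psi_0}^2+\ab{\psi_1}^2$, where $\psi_a=2^{-n/2}\sum_{\omega:\,\alpha_n=a}i^{c_n(\omega)}$ is the total amplitude to arrive at site $a$. The cleanest way to finish is to recognize $(\psi_0,\psi_1)^{\!\top}=U^n e_0$; since $U$ is unitary this vector has unit norm, giving $\sum_{j,k}D^n_{jk}=\doubleab{U^n e_0}^2=1$. If one prefers a direct evaluation, parametrizing paths by their $n$ change-bits turns the two site-sums into $S_0\pm S_1=(1\pm i)^n$, whence $\ab{S_0}^2+\ab{S_1}^2=2^n$ and the sum is again $1$. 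I expect this last step to be the only part needing a genuine idea rather than bookkeeping: the interference makes the naive term-by-term sum awkward, and both routes---unitarity or the $(1\pm i)^n$ identity---sidestep it.
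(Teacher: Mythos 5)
Your proof is correct, but it takes a genuinely different route from the paper's. For positive semi-definiteness the paper factors $D^n=\tfrac{1}{2^n}\,C\circ P$ as a Hadamard product, where $C=\sqbrac{i^{c_n(j)-c_n(k)}}$ is (rank-one) positive semi-definite and the parity-pattern matrix $P=\sqbrac{p_{jk}}$ is shown positive semi-definite by an explicit spectral computation (eigenvalues $0$ and $2^{n-1}$), after which the Schur product theorem finishes the job; you instead exhibit $D^n$ directly as the Gram matrix of the vectors $\phi_j=2^{-n/2}\,i^{c_n(j)}e_{\alpha_n(\omega_j)}\in\complex^2$, equivalently as block-diagonal with two rank-one blocks $2^{-n}ww^\ast$. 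Your factorization buys more than the paper's at this point: it shows at once that $\rmrank(D^n)\le 2$, anticipating Theorem~\ref{thm56} (eigenvalues $1/2,1/2,0,\ldots,0$), and your $\phi_j$ are precisely the values $\escript\paren{\brac{\omega_j}}$ of the $\complex^2$-valued measure of Theorem~\ref{thm23}, so your argument meshes with the paper's later machinery. You are also more careful than the paper on one bookkeeping point: you verify that equal parity of $j,k$ forces $c_n(j)-c_n(k)$ to be even (via $\alpha_n\equiv c_n\pmod 2$), which is what makes $i^{c_n(j)-c_n(k)}=\pm 1$ rather than $\pm i$; the paper simply says the values "follow from \eqref{eq23}.'' For the normalization $\sum_{j,k}D^n_{jk}=1$ the divergence is sharpest: the paper argues that "by symmetry'' the off-diagonal $+1$s and $-1$s cancel, reducing the sum to the trace — a terse counting claim — whereas you compute $\sum_{j,k}D^n_{jk}=\ab{\psi_0}^2+\ab{\psi_1}^2$ and close either by unitarity of $U^n$ (identifying $(\psi_0,\psi_1)^{\top}=U^ne_0$) or by the explicit identity $S_0\pm S_1=(1\pm i)^n$ giving $\ab{S_0}^2+\ab{S_1}^2=2^n$; both of your routes are airtight and, unlike the cancellation count, explain \emph{why} the total is $1$ (conservation of probability under unitary evolution). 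The paper's Hadamard-product route has the modest advantage of invoking only standard matrix facts and of treating the pattern matrix $P$ once and for all, but on balance your argument is at least as rigorous and somewhat more informative.
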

\begin{proof}
It is well-known that the Hadamard product of two positive semi-definite
square matrices of the same size is again positive semi-definite. It
follows from \eqref{eq22} that 
\begin{equation}         % equation (2.3)
\label{eq23}
D_{jk}^n=\tfrac{1}{2^n}\,i^{\sqbrac{c_n(j)-c_n(k)}}p_{jk}
\end{equation}
where $p_{jk}=1$ if $j,k$ have the same parity and $p_{jk}=0$, otherwise. Defining the matrices $P=\sqbrac{p_{jk}}$ and
\begin{equation*}
C=\sqbrac{i^{\sqbrac{c_n(j)-c_n(k)}}}
\end{equation*}
we have that $D^n=\tfrac{1}{2^n}\,C\circ P$. Now $C$ is clearly positive semi-definite. To show that $P$ is positive
semi-definite, notice that
\begin{equation*}
P=
\left[\begin{matrix}\noalign{\smallskip}1&0&1&0&\cdots&1&0\\\noalign{\smallskip}
  0&1&0&1&\cdots&0&1\\\noalign{\smallskip}1&0&1&0&\cdots&1&0\\\noalign{\smallskip}
  &\vdots&&&&&\\\noalign{\smallskip}0&1&0&1&\cdots&0&1\\\noalign{\smallskip}
  \end{matrix}\right]
\end{equation*}
We see that $P$ is self-adjoint, $\rmrank (P)=2$ and $\rmrange (P)$ is generated by the vectors
$v_1=(1,0,1,0,\ldots ,1,0)$ and $v_2=(0,1,0,1,\ldots ,0,1)$. Now $Pv_1=2^{n-1}v_1$ and $Pv_2=2^{n-1}v_2$. For any $2^n$-dimensional vector $v$ with $v\perp v_1$ and $v\perp v_2$, we have that $Pv=0$. Hence, the eigenvalues of $P$ are $0$ and $2^{n-1}$. It follows that $P$ is positive semi-definite and hence, $D^n$ is positive semi-definite. The values of $D_{jk}^n$ given in the statement of the theorem follow from \eqref{eq23}. By symmetry, there are as many $1$s as $-1$s among the off-diagonal entries of $D^n$. Hence, 
\begin{equation*}
\sum _{j,k=0}^{2^n-1}D_{jk}^n=\sum _{j=0}^{2^n-1}D_{jj}^n=\sum _{j=0}^{2^n-1}\frac{1}{2^n}=1
\qedhere
\end{equation*}
\end{proof}

It follows from Theorem~\ref{thm21} that $D^n$ is a density matrix.

\begin{exam}{3}                    % Example 3
For $n=3$ we have $\Omega _3=\brac{0,1,\ldots ,7}$ and using vector notation $c_3=\paren{c_3(0),\ldots ,c_3(7)}$ we have that
\begin{equation*}
c_3=(0,1,2,1,2,3,2,1)
\end{equation*}
Applying Theorem~\ref{thm21} we can read off the entries of $D^3$ to obtain
\begin{equation*}
D^3=\frac{1}{8}
\left[\begin{matrix}\noalign{\smallskip}\ 1&\ 0&-1&\ 0&-1&\ 0&-1&\ 0\\\noalign{\smallskip}
  \ 0&\ 1&\ 0&\ 1&\ 0&-1&\ 0&\ 1\\\noalign{\smallskip}-1&\ 0&\ 1&\ 0&\ 1&\ 0&\ 1&\ 0\\\noalign{\smallskip}
  \ 0&\ 1&\ 0&\ 1&\ 0&-1&\ 0&\ 1\\\noalign{\smallskip}-1&\ 0&\ 1&\ 0&\ 1&\ 0&\ 1&\ 0\\\noalign{\smallskip}
  \ 0&-1&\ 0&-1&\ 0&\ 1&\ 0&-1\\\noalign{\smallskip}-1&\ 0&\ 1&\ 0&\ 1&\ 0&\ 1&\ 0\\\noalign{\smallskip}
  \ 0&\ 1&\ 0&\ 1&\ 0&-1&\ 0&\ 1\\\noalign{\smallskip}
  \end{matrix}\right]
\end{equation*}
\end{exam}

\begin{cor}       % Corollary 2.2
\label{cor22}
{\rm (a)}\enspace $inj,jnk\Rightarrow i\!\!\not{\!n}k$; $inj,jdk$ or $jck\Rightarrow ink$.\newline
{\rm (b)}\enspace $icj,jck\Rightarrow ick$.
{\rm (c)}\enspace $idj,jdk\Rightarrow ick$.
{\rm (d)}\enspace $icj,jdk\Rightarrow idk$.
\end{cor}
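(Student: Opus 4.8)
The plan is to reduce all four implications to a single bookkeeping fact about the residue of $c_n(\cdot)$ modulo $4$. The starting point is the identity $I_{ij}^n=2\rmre D_{ij}^n$ already derived above, so that the interference type of a pair $(i,j)$ is governed entirely by the sign of $\rmre D_{ij}^n$, which Theorem~\ref{thm21} evaluates explicitly. First I would record the elementary observation that the integer parity of $j$ coincides with the parity of $c_n(j)$: since the particle starts at site $0$ and each of its $c_n(j)$ position changes flips the current site, the terminal digit $\alpha_n$ of $\omega_j=\alpha_0\cdots\alpha_n$ equals $c_n(j)\bmod 2$; and because $\omega_j=j$ in binary with $\alpha_n$ the least significant bit, $\alpha_n=j\bmod 2$. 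Hence ``$j$ and $k$ have the same parity'' is the same as ``$c_n(j)\equiv c_n(k)\pmod 2$'', i.e.\ $c_n(j)-c_n(k)$ is even.

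With this identification in hand, Theorem~\ref{thm21} reads uniformly: $D_{jk}^n=0$ when $c_n(j)-c_n(k)$ is odd, $D_{jk}^n=1/2^n$ when $c_n(j)-c_n(k)\equiv 0\pmod 4$, and $D_{jk}^n=-1/2^n$ when $c_n(j)-c_n(k)\equiv 2\pmod 4$. In every case $D_{jk}^n$ is real, so the whole interference relation depends only on the residue $r_{ij}$ of $c_n(i)-c_n(j)$ modulo $4$, via the dictionary
\begin{equation*}
r_{ij}\equiv 0\ \Leftrightarrow\ i\,c\,j,\qquad r_{ij}\equiv 2\ \Leftrightarrow\ i\,d\,j,\qquad r_{ij}\ \text{odd}\ \Leftrightarrow\ i\,n\,j .
\end{equation*}

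The structural fact that makes the corollary immediate is additivity, $r_{ik}\equiv r_{ij}+r_{jk}\pmod 4$, which follows from $c_n(i)-c_n(k)=\bigl(c_n(i)-c_n(j)\bigr)+\bigl(c_n(j)-c_n(k)\bigr)$. Each implication then becomes a one-line addition table modulo $4$. For (a): two odd summands sum to an even residue, so $i\,n\,j$ and $j\,n\,k$ force $r_{ik}\in\{0,2\}$, whence $i$ and $k$ interfere, i.e.\ $i\!\!\not{\!n}k$; an odd plus an even residue ($j\,c\,k$ or $j\,d\,k$ gives $r_{jk}\in\{0,2\}$) is again odd, giving $i\,n\,k$. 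For (b), $0+0\equiv 0$ yields $i\,c\,k$; for (c), $2+2\equiv 0$ yields $i\,c\,k$; and for (d), $0+2\equiv 2$ yields $i\,d\,k$.

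The only step that is not purely formal is the parity identification of the first paragraph, which converts the two separate hypotheses of Theorem~\ref{thm21} (equal parity of the integers, and $c_n\pmod 4$) into the single invariant $r_{ij}$. Once that is established there is no genuine obstacle: the remainder is the four additions in $\{0,1,2,3\}$ displayed above, and I would simply present the dictionary followed by these four verifications.
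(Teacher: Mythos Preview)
Your argument is correct. The paper states this result as a corollary of Theorem~\ref{thm21} without proof, so there is no ``paper's own proof'' to compare against; you have supplied exactly the details the authors leave implicit. The one substantive step is your identification of the integer parity of $j$ with the parity of $c_n(j)$, which is what allows the two separate conditions in Theorem~\ref{thm21} (same parity of indices, and congruence of $c_n$ modulo~$4$) to collapse into the single residue $r_{ij}=c_n(i)-c_n(j)\bmod 4$; without it, parts~(c) and~(d) would not follow, since ``$c_n(i)\not\equiv c_n(j)\pmod 4$'' alone does not force the difference to be $\equiv 2$. Once that identification is made, the additivity $r_{ik}\equiv r_{ij}+r_{jk}$ reduces everything to the four displayed additions, and nothing more is needed.
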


\begin{exam}{4}                    % Example 4
Referring to $D^3$ in Example~3 we see that $0d2$ $2c4$ and $0d4$. Also, $2d0$, $0d4$ and $2c4$. Finally, $1c3$, $3c7$ and $1c7$.
\end{exam}

We now describe Hilbert space representations for $D_n$. Let $H$ be a
finite-dimensional complex Hilbert space. 
A map $\escript\colon\ascript_n\to H$ satisfying 
$\escript (\cup A_i)=\sum\escript (A_i)$ for any
sequence of mutually disjoint sets $A_i\in\ascript _n$ is a
\textit{vector-valued measure} on $\ascript _n$.  
If $\rmspan\brac{\escript (A)\colon A\in\ascript _n}=H$, then $\escript$
is a \textit{spanning} vector-valued measure. The next result follows
from 
Theorem~2.3 of \cite{gud104} (cf. \cite{djso10}).
%% Theorem~\ref{thm23} \cite{gud104} (cf. \cite{djso10}).

\begin{thm}       % Theorem 2.3
\label{thm23}
Let $D_n$ be the $n$-decoherence functional. There exists a spanning vector-valued measure
$\escript\colon\ascript _n\to\complex ^2$ such that $D_n(A,B)=\elbows{\escript (A),\escript (B)}$ for all
$A,B\in\ascript _n$. If $\escript '\colon\ascript _n\to H$ is a spanning
vector-valued measure, then there exists a unitary operator
$U\colon\complex ^2\to H$ such that $U\escript (A)=\escript '(A)$ for
all $A\in\ascript _n$. 
\end{thm}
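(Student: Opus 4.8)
The plan is to read the desired representation $D_n(A,B)=\elbows{\escript(A),\escript(B)}$ as a Gram-matrix factorization of the decoherence matrix $D^n$, and to handle existence and uniqueness separately. Throughout I take $\elbows{\,\cdot\,,\cdot\,}$ to be linear in its first slot, so that a vector-valued measure representing $D_n$ is the same thing as a family of vectors $e_j:=\escript(\brac{\omega_j})$, $j=0,\dots,2^n-1$, with $D^n_{jk}=\elbows{e_j,e_k}$; additivity then forces $\escript(A)=\sum_{\omega_j\in A}e_j$, so everything is determined on singletons and the whole statement reduces to a question about the $2^n$ vectors $e_j$.

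For existence I would build the $e_j$ directly from the factorization already obtained in the proof of Theorem~\ref{thm21}. There $D^n=\tfrac{1}{2^n}\,C\circ P$ with $C=\sqbrac{i^{c_n(j)-c_n(k)}}$ and $P$ the parity matrix. Writing $w_j=i^{c_n(j)}$ gives $C_{jk}=w_j\overline{w_k}$, a rank-one matrix, while $P=v_1v_1^*+v_2v_2^*$ for the real vectors $v_1,v_2$ of the theorem. Hence
\begin{equation*}
D^n_{jk}=\tfrac{1}{2^n}\,w_j\overline{w_k}\paren{v_{1,j}v_{1,k}+v_{2,j}v_{2,k}}=u_{1,j}\overline{u_{1,k}}+u_{2,j}\overline{u_{2,k}},
\end{equation*}
where $u_{a,j}:=\tfrac{1}{\sqrt{2^n}}\,w_j v_{a,j}$. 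Setting $e_j:=(u_{1,j},u_{2,j})\in\complex^2$ and extending additively gives a vector-valued measure, and expanding both $D_n(A,B)$ and $\elbows{\escript(A),\escript(B)}$ over singletons yields $D_n(A,B)=\elbows{\escript(A),\escript(B)}$. To see that $\escript$ is spanning, note $e_0=\tfrac{1}{\sqrt{2^n}}(1,0)$ and $e_1=\tfrac{i}{\sqrt{2^n}}(0,1)$ are linearly independent, so $\rmspan\brac{\escript(A)}=\complex^2$.

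For uniqueness, suppose $\escript'\colon\ascript_n\to H$ is a spanning vector-valued measure representing $D_n$, and put $e_j':=\escript'(\brac{\omega_j})$, so that $\elbows{e_j',e_k'}=D^n_{jk}=\elbows{e_j,e_k}$. I would define $U$ on the spanning set by $Ue_j=e_j'$ and extend linearly. The single identity
\begin{equation*}
\doubleab{\textstyle\sum_j c_j e_j}^2=\sum_{j,k}c_j\overline{c_k}\,D^n_{jk}=\doubleab{\textstyle\sum_j c_j e_j'}^2
\end{equation*}
shows at once that $U$ is well defined (the two families satisfy the same linear relations), isometric, and hence injective; since $\brac{e_j}$ spans $\complex^2$ and $\brac{e_j'}$ spans $H$, it is also onto, so $U$ is unitary and $U\escript(A)=\sum_{\omega_j\in A}e_j'=\escript'(A)$.

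The main obstacle is not any single computation but pinning down the target dimension: one must ensure that a spanning representation lands specifically in $\complex^2$, i.e. that $\rmrank D^n=2$, so that the claimed $U\colon\complex^2\to H$ even type-checks. This is exactly where Theorem~\ref{thm21} does the real work, since its Hadamard factorization exhibits $D^n$ as a sum of precisely two rank-one terms with disjoint supports (the even- and odd-index coordinates), forcing $\dim H=\dim\rmspan\brac{e_j'}=\rmrank D^n=2$ for any spanning $\escript'$. Some care is also needed with the inner-product convention, so that the conjugate-symmetry of $\elbows{\,\cdot\,,\cdot\,}$ matches the Hermiticity of $D^n$.
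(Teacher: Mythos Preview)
Your argument is correct. The paper does not actually prove Theorem~\ref{thm23}; it simply cites Theorem~2.3 of \cite{gud104} (a general GNS-type representation theorem for strongly positive decoherence functionals), so your direct construction is genuinely different in character.

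The trade-off is the expected one. The cited general result applies to any positive semi-definite decoherence matrix and requires no knowledge of the structure of $D^n$, but at the cost of being a black box here. Your route is more concrete and self-contained: you exploit the Hadamard factorization $D^n=\tfrac{1}{2^n}C\circ P$ from Theorem~\ref{thm21} to write $D^n$ explicitly as a sum of two rank-one pieces supported on the even and odd indices, read off the vectors $e_j\in\complex^2$ directly, and then run the standard Gram-matrix uniqueness argument. In effect you are anticipating the spectral decomposition that the paper only makes explicit much later in Theorem~\ref{thm56} and equation~\eqref{eq57}: your $u_{1}$ and $u_{2}$ are (up to normalization) exactly the eigenvectors $\psi_0^n$ and $\psi_1^n$ appearing there. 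Your final paragraph correctly identifies the one nontrivial point, namely that any spanning representation must land in a two-dimensional space because $\rmrank D^n=2$; this is what makes the unitary $U\colon\complex^2\to H$ well-typed, and your factorization establishes it cleanly.
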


\begin{cor}       % Corollary 2.4
\label{cor24}
{\rm (a)}\enspace $D_n(\Omega _n,\Omega _n)=1$.
{\rm (b)}\enspace $A\mapsto D_n(A,B)$ is a complex-valued measure for every $B\in\ascript _n$.
{\rm (c)}\enspace If $A_1,\ldots ,A_k$ are sets in $\ascript_n$, 
  then the $k\times k$ matrix $D_n(A_i,A_j)$, $i,j=1,\ldots ,k$ 
  is positive semi-definite (``strong positivity'').
\end{cor}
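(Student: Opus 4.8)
The plan is to read all three parts off the vector-valued measure representation furnished by Theorem~\ref{thm23}, which supplies a spanning map $\escript\colon\ascript_n\to\complex^2$ with $D_n(A,B)=\elbows{\escript(A),\escript(B)}$. Part~(a) needs no representation at all: by the definition of $D_n$ we have $D_n(\Omega_n,\Omega_n)=\sum\brac{D_{ij}^n\colon\omega_i,\omega_j\in\Omega_n}=\sum_{j,k}D_{jk}^n$, and Theorem~\ref{thm21} has already evaluated this sum as $1$.

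For part~(b) I would fix $B\in\ascript_n$ and exploit the additivity of $\escript$. If $A_1,A_2,\ldots$ are mutually disjoint members of $\ascript_n$ with union $A$, then $\escript(A)=\sum_i\escript(A_i)$ by the definition of a vector-valued measure, so
\begin{equation*}
D_n(A,B)=\elbows{\escript(A),\escript(B)}=\elbows{\sum_i\escript(A_i),\escript(B)}=\sum_i\elbows{\escript(A_i),\escript(B)}=\sum_iD_n(A_i,B),
\end{equation*}
the middle step being the additivity of the inner product in its first argument. Since $\ascript_n$ is a finite algebra only finite additivity is at stake, so this computation already exhibits $A\mapsto D_n(A,B)$ as a complex-valued measure, with $D_n(\emptyset,B)=\elbows{0,\escript(B)}=0$ coming for free.

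Part~(c) is the observation that the matrix in question is a Gram matrix. Writing $v_i=\escript(A_i)\in\complex^2$, we have $D_n(A_i,A_j)=\elbows{v_i,v_j}$, so $\sqbrac{D_n(A_i,A_j)}$ is precisely the Gram matrix of $v_1,\ldots,v_k$; and the Gram matrix of any finite family of vectors in an inner product space is positive semi-definite, since for scalars $c_1,\ldots,c_k$ the associated quadratic form collapses to $\doubleab{\sum_j c_j v_j}^2\ge 0$.

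I expect no serious obstacle: the corollary is essentially a dictionary translation of Theorem~\ref{thm23}. The only points needing care are bookkeeping ones---matching the sesquilinearity convention of $\elbows{\cdot,\cdot}$ so that the conjugates fall on the intended index in~(c), and observing that the finiteness of $\ascript_n$ renders the distinction between finite and countable additivity vacuous in~(b). The substantive content has already been absorbed into Theorem~\ref{thm21} and Theorem~\ref{thm23}.
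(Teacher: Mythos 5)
Your proposal is correct and follows essentially the same route as the paper: part (a) from Theorem~\ref{thm21} together with the definition of $D_n$, and part (c) as the Gram-matrix computation $\sum_{i,j}D_n(A_i,A_j)a_i\overline{a_j}=\doubleab{\sum_i a_i\escript(A_i)}^2\ge 0$ via the vector-valued measure of Theorem~\ref{thm23}, exactly as in the printed proof. The only cosmetic difference is part (b), which the paper reads off directly from the defining finite sum over pairs $(\omega_i,\omega_j)$ rather than routing through the additivity of $\escript$; both arguments are immediate, though the direct one is logically the cleaner choice, since bi-additivity of $D_n$ is in effect an input to the representation theorem rather than a consequence of it.
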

\begin{proof}
(a)\enspace follows from Theorem~\ref{thm21} and the definition of $D_n$ while 
(b) follows from the definition of $D_n$. To verify 
(c), let $A_1,\ldots ,A_k\in\ascript _n$ and let $a_1,\ldots ,a_k\in\complex$. 
Then by Theorem~\ref{thm23} we have that
\begin{align*}
  \sum _{i,j=1}^kD_n(A_i,A_j)a_i\overline{a_j}
   &=\sum _{i,j=1}^k\elbows{\escript (A_i),\escript (A_j)}a_i\overline{a_j}\\
   &=\elbows{\sum _{i=1}^ka_i\escript (A_i),\sum _{j=1}^ka_j\escript (A_j)}\ge 0\qedhere
\end{align*}
\end{proof}

It follows from Corollary~\ref{cor24}(c) that $\mu _n(A)=D_n(A,A)\ge 0$ for all $A\in\ascript_n$,
$\mu _n(\Omega _n)=1$ and by inspection $\mu _n(\omega )=1/2^n$ for all
$\omega\in\Omega _n$. 
The next result is proved in \cite{gud101, sor94, sor071}.

\begin{thm}       % Theorem 2.5
\label{thm25}
The $n$-truncated $q$-measure $\mu _n$ satisfies the following conditions.
{\rm (a)}\enspace (\textit{grade}-2 \textit{additivity}) For mutually disjoint $A,B,C\in\ascript _n$ we have\newline
$\mu _n(A\cup B\cup C)=\mu _n(A\cup B)+\mu _n(A\cup C)+\mu _n(B\cup C)-\mu _n(A)-\mu _n(B)-\mu _n(C)$
{\rm (b)}\enspace (regularity) If $\mu _n(A)=0$, then $\mu _n(A\cup
B)=\mu _n(B)$ whenever $A\cap B=\emptyset$. If $A\cap B=\emptyset$ and
$\mu _n(A\cup B)=0$, then $\mu _n(A)=\mu _n(B)$. 
\end{thm}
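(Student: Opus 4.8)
The plan is to work entirely inside the Hilbert-space picture furnished by Theorem~\ref{thm23}. Fix the spanning vector-valued measure $\escript\colon\ascript _n\to\complex ^2$ with $D_n(A,B)=\elbows{\escript (A),\escript (B)}$, so that $\mu _n(A)=D_n(A,A)=\doubleab{\escript (A)}^2$. The only additional fact needed is that $\escript$ is additive on disjoint sets, i.e.\ $\escript (A\cup B)=\escript (A)+\escript (B)$ whenever $A\cap B=\emptyset$; both parts then reduce to elementary identities for norms and inner products in $\complex ^2$.

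For part (a), abbreviate $a=\escript (A)$, $b=\escript (B)$, $c=\escript (C)$. Since $A,B,C$ are mutually disjoint, $\escript (A\cup B\cup C)=a+b+c$ and likewise for the pairwise unions, so I would simply expand
\[
\mu _n(A\cup B\cup C)=\doubleab{a+b+c}^2=\doubleab{a}^2+\doubleab{b}^2+\doubleab{c}^2+2\rmre\elbows{a,b}+2\rmre\elbows{a,c}+2\rmre\elbows{b,c}.
\]
Expanding $\doubleab{a+b}^2$, $\doubleab{a+c}^2$, $\doubleab{b+c}^2$ the same way and forming the combination $\mu _n(A\cup B)+\mu _n(A\cup C)+\mu _n(B\cup C)-\mu _n(A)-\mu _n(B)-\mu _n(C)$, every diagonal term $\doubleab{a}^2,\doubleab{b}^2,\doubleab{c}^2$ appears with net coefficient $1$ and every cross term $2\rmre\elbows{\cdot,\cdot}$ appears exactly once, matching the expansion above. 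Thus grade-2 additivity says precisely that the quadratic form $v\mapsto\doubleab{v}^2$ has vanishing third-order difference. (Equivalently, one can avoid the representation and expand $\mu _n=D_n(\,\cdot\,,\,\cdot\,)$ directly via the bilinearity of $D_n$ and the disjointness, sorting the index pairs into the nine blocks $D_n(X,Y)$ with $X,Y\in\brac{A,B,C}$; the identical cancellation occurs.)

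For part (b), both implications follow from the fact that a vanishing norm forces a vanishing vector. If $\mu _n(A)=0$ then $\escript (A)=0$, so for $A\cap B=\emptyset$ one gets $\mu _n(A\cup B)=\doubleab{\escript (A)+\escript (B)}^2=\doubleab{\escript (B)}^2=\mu _n(B)$. For the second implication, $A\cap B=\emptyset$ together with $\mu _n(A\cup B)=\doubleab{\escript (A)+\escript (B)}^2=0$ gives $\escript (A)=-\escript (B)$, whence $\mu _n(A)=\doubleab{\escript (A)}^2=\doubleab{\escript (B)}^2=\mu _n(B)$. (If one prefers to stay with $D_n$, the first implication also follows from the strong positivity of Corollary~\ref{cor24}(c): the principal $2\times 2$ submatrix $\bigl[D_n(X,Y)\bigr]$ with $X,Y\in\brac{A,B}$ is positive semi-definite with a zero diagonal entry, forcing $D_n(A,B)=0$.)

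I do not expect a genuine obstacle here: once Theorem~\ref{thm23} is in hand, each assertion is a one-line consequence of the Hilbert-space structure, and the only thing to watch is the bookkeeping of the cross terms in part (a). The real content is that $\mu _n$ is the modulus-squared of an additive $\complex ^2$-valued set function, which is exactly what makes it a bona fide grade-2 quantum measure rather than an ordinary measure.
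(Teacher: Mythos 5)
Your proposal is correct, but it cannot be compared line-by-line with a proof in the paper, because the paper gives none: Theorem~\ref{thm25} is stated with the remark that it ``is proved in \cite{gud101, sor94, sor071}.'' In those references the standard argument runs directly through the decoherence functional: grade-2 additivity comes from expanding $\mu _n(A\cup B\cup C)=D_n(A\cup B\cup C,A\cup B\cup C)$ into the nine blocks $D_n(X,Y)$, $X,Y\in\brac{A,B,C}$, using biadditivity of $D_n$ on disjoint sets and Hermiticity to pair $D_n(X,Y)+D_n(Y,X)=2\rmre D_n(X,Y)$ (this is exactly your parenthetical alternative), while regularity is obtained from strong positivity via the Cauchy--Schwarz-type inequality $\ab{D_n(A,B)}^2\le\mu _n(A)\,\mu _n(B)$. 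Your main route instead imports the spanning vector-valued measure $\escript$ of Theorem~\ref{thm23} and reduces everything to norm identities in $\complex ^2$; this is legitimate within the paper's logical order, since Theorem~\ref{thm23} precedes Theorem~\ref{thm25}, and the additivity $\escript (A\cup B)=\escript (A)+\escript (B)$ for disjoint $A,B$ is part of the paper's definition of a vector-valued measure, not an extra assumption. What the Hilbert-space packaging buys you is uniformity and brevity: both regularity implications become immediate (in particular the second one, $\escript (A)=-\escript (B)\Rightarrow\doubleab{\escript (A)}=\doubleab{\escript (B)}$, is cleaner than the usual manipulation with $2\rmre D_n(A,B)$), and it makes transparent that the whole theorem is the statement that $\mu _n$ is the squared norm of an additive vector measure. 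The cost is reliance on the representation theorem, itself only cited from \cite{gud104}; your bilinear-expansion aside and the strong-positivity argument of Corollary~\ref{cor24}(c) show you could dispense with it, so there is no gap.
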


It follows from Theorem~\ref{thm25}(a) and induction that for $3\le m\le n$ we have for
$\brac{i_1,\ldots ,i_m}\subseteq\Omega _n$
\begin{equation}         % equation (2.4)
\label{eq24}
  \mu _n\paren{\brac{i_1,\ldots ,i_m}} = \sum _{j<k=1}^m\mu _n\paren{\brac{i_j,i_k}}-(m-2)\sum _{j=1}^m\mu _n(i_j)
\end{equation}
We have seen that
\begin{equation*}
\mu _n\paren{\brac{i,j}}=\tfrac{1}{2^{n-1}}+2D_{ij}^n
\end{equation*}
Applying Theorem~\ref{thm21} we conclude that
\begin{equation}         % equation (2.5)
\label{eq25}
\mu _n\paren{\brac{i,j}}
  =\begin{cases}1/2^{n-1}&\text{if $inj$}\\1/2^{n-2}&\text{if $icj$}\\0&\text{if $idj$}\end{cases}
\end{equation}
An event $A\in\ascript _n$ is \textit{precluded} if $\mu _n(A)=0$. 
The coevent (or anhomomorphic logic) interpretation of the path-integral 
\cite{gt0906, gud102, gud103, sor071, sor072, capetown}
confers a special importance on the precluded events.
%% Due to the coevent interpretation (or anhomomorphic logic) of quantum
%% mechanics \cite{gt0906, gud102, gud103, sor071, sor072}, it is of
%% interest to describe the precluded events. 
We shall show that precluded
events are
relatively
%% quite 
rare. As an illustration, consider Examples~1 and
2. Besides the empty set $\emptyset$ there are no precluded events in
$\ascript _1$ and the only precluded event in 
$\ascript _2$ is $\brac{0,2}$.

If $m\le n$ then we can consider $\Omega _m$ as a subset of $\Omega_n$
by padding on the right with zeros; and this in turn would let us
consider subsets of $\Omega_m$ as subsets of $\Omega_n$.
If this is done,
it follows from \eqref{eq24} and \eqref{eq25} that for $A\in\ascript_m$ we have that
\begin{equation}         % equation (2.6)
\label{eq26}
  \mu _m(A) = 2^{n-m}\mu _n(A)
\end{equation}
Thus, if $A$ is precluded in $\ascript _m$ then $A$ is precluded in
$\ascript _n$ for all $n\ge m$.
However, this embedding of $\ascript_m$ into $\ascript_n$ is not unique,
nor is it the most natural way to proceed when the elements of
$\ascript_m$ and $\ascript_n$ are thought of as events.
Rather one would regard $\Omega_m$ as a {\it quotient} of $\Omega_n$,
identifying an event in $\Omega_m$ with its lift to $\Omega_n$.  This is
the the point of view adopted implicitly in the following section.

%% RDS:
% If $m\le n$ then we can consider $\Omega _m$ as a subset of $\Omega _n$. 
% In particular, if $A\in\ascript _m$ then $A\in\ascript _n$. 
% It follows from \eqref{eq24} and \eqref{eq25} that for $A\in\ascript _m$ we have that
% \begin{equation}         equation (2.6)
% \label{eq26}
%   \mu _m(A) = 2^{n-m}\mu _n(A)
% \end{equation}
% thus, if $A$ is precluded in $\ascript _m$ then $A$ is precluded in $\ascript _n$ for all $n\ge m$.

\begin{lem}       % Lemma 2.6
\label{lem26}
If $A\in\ascript _n$ has odd cardinality, then $A$ is not precluded.
\end{lem}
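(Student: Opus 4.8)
The plan is to argue by parity modulo $2$. By Theorem~\ref{thm21} every entry of the integer matrix $2^nD^n$ lies in $\brac{-1,0,1}$, and all of its diagonal entries equal $1$ (equivalently, $\mu _n(\omega )=1/2^n$ for each singleton). Since $\mu _n(A)=\sum\brac{D_{ij}^n\colon\omega _i,\omega _j\in A}$, the quantity $2^n\mu _n(A)$ is an integer, so to prove that $A$ is not precluded it suffices to show this integer is nonzero; I would in fact show it is odd.

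First I would split the double sum defining $2^n\mu _n(A)$ into its diagonal and off-diagonal parts. The diagonal part contributes $\sum\brac{2^nD_{ii}^n\colon\omega _i\in A}=\ab{A}$, since each diagonal entry of $2^nD^n$ equals $1$. For the off-diagonal part I would invoke the symmetry of $D^n$: by Theorem~\ref{thm21} all entries of $D^n$ are real, so the Hermitian matrix $D^n$ is in fact symmetric, and hence $\sum\brac{2^nD_{ij}^n\colon\omega _i,\omega _j\in A,\ i\ne j}=2\sum\brac{2^nD_{ij}^n\colon\omega _i,\omega _j\in A,\ i<j}$ is an even integer.

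Combining the two parts gives $2^n\mu _n(A)\equiv\ab{A}\pmod{2}$. Since $\ab{A}$ is odd by hypothesis, $2^n\mu _n(A)$ is an odd integer and therefore nonzero, so $\mu _n(A)\neq 0$ and $A$ is not precluded. There is little obstacle here beyond recognizing the parity trick; the only points needing care are that the diagonal entries are exactly $1/2^n$ (so the diagonal sum is precisely $\ab{A}$) and that the off-diagonal sum is even, which relies on the genuine symmetry $D_{ij}^n=D_{ji}^n$ rather than on mere Hermiticity.
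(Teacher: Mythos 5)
Your proof is correct, and it reaches the parity contradiction by a more direct route than the paper does. The paper first invokes grade-2 additivity in the form of \eqref{eq24} to express $\mu_n(A)$ through the doubleton measures, then uses \eqref{eq25} (each $\mu_n(\{i_j,i_k\})$ is $0$, $1/2^{n-1}$ or $1/2^{n-2}$, i.e.\ an \emph{even} multiple of $2^{-n}$) to conclude that $\sum_{j<k}\mu_n(\{i_j,i_k\})=(m-2)m/2^n$ is impossible, the left side being an even multiple of $2^{-n}$ while $(m-2)m$ is odd; it must also treat singletons separately, since \eqref{eq24} requires $m\ge 3$. You instead split the Gram sum $2^n\mu_n(A)$ into diagonal and off-diagonal parts and use only Theorem~\ref{thm21}: the entries of $2^nD^n$ lie in $\{-1,0,1\}$, the diagonal is identically $1$, and the matrix is genuinely symmetric --- which you are right to flag as the point needing care; it holds because the description in Theorem~\ref{thm21} is symmetric in $j,k$ (equal parity of $j,k$ forces $c_n(j)-c_n(k)$ even, so all entries are real and the condition $c_n(j)\equiv c_n(k)\pmod 4$ is symmetric) --- giving $2^n\mu_n(A)\equiv\ab{A}\pmod 2$. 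At heart this is the same parity trick: the evenness of the paper's doubleton values is exactly the evenness of your off-diagonal sum, via $\mu_n(\{i,j\})=2^{1-n}+2D^n_{ij}$. But your version bypasses \eqref{eq24} and \eqref{eq25} entirely, handles $\ab{A}=1$ uniformly rather than as a side remark, and in fact yields the slightly stronger quantitative conclusion that $\mu_n(A)\ge 2^{-n}$ whenever $\ab{A}$ is odd, since a nonnegative odd integer is at least $1$.
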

\begin{proof}
Suppose $A=\brac{i_1,\ldots ,i_m}\in\ascript _n$ where $m$ is odd. If $\mu _n(A)=0$ then applying \eqref{eq24} gives
\begin{equation}         % equation (2.7)
\label{eq27}
\sum _{j<k=1}^m\mu _n\paren{\brac{i_j,i_k}}=\frac{(m-2)m}{2^n}
\end{equation}
where we are assuming that $m\ge 3$ because singleton sets are not precluded. Notice that $(m-2)m$ is odd. However, by \eqref{eq25} the left side of \eqref{eq27} has the form $r/2^n$ where $r$ is even. This is a contradiction. Hence, $\mu _n(A)\ne 0$ so $A$ is not precluded.
\end{proof}

\begin{exam}{5}                    % Example 5
For $n=3$ we have $\Omega _3=\brac{0,1,\ldots ,7}$. 
Since $\ab{\ascript_3}=2^8$ is large, 
it is impractical to find
by hand
$\mu _3 (A)$ for all $A\in\ascript _3$ so we shall just compute some of
them. Of course, $\mu _3(\emptyset )=0$ and 
$\mu (i)=1/8$, $i=0,1,\ldots ,7$. By \eqref{eq25} we have that
\begin{equation*} 
  \mu _3\paren{\brac{i,j}}
  =\begin{cases}1/4&\text{if $inj$}\\1/2&\text{if $icj$}\\0&\text{if $idj$}\end{cases}
\end{equation*}
Of the 28 doubleton sets the only precluded ones are: 
$\brac{0,2}$, $\brac{0,4}$, $\brac{0,6}$, $\brac{1,5}$, $\brac{3,5}$, $\brac{5,7}$. 
By Lemma~\ref{lem26} there are no precluded tripleton sets  $A=\brac{i,j,k}$. By \eqref{eq24} we have
\begin{equation*}
\mu _3(A)=\mu _3\paren{\brac{i,j}}+\mu _3\paren{\brac{i,k}}+\mu _3\paren{\brac{j,k}}-\tfrac{3}{8}
\end{equation*}
The possibilities are:
$inj, ink, jdk,\mu _3(A)=1/8$;
$inj, ink,jck,\mu _3(A)=5/8$; 
$icj, ick, jck,\mu _3(A)=9/8$ 
and the other possibilities coincide with one of these by symmetry. For
a set of cardinality~4, $A=\brac{i,j,k,l}$ and by \eqref{eq24} we have 
\begin{align*}
  \mu _3(A)
  &=\mu _3\paren{\brac{i,j}}+\mu _3\paren{\brac{i,k}}+\mu _3\paren{\brac{i,l}}+\mu _3\paren{\brac{j,k}}\\
  &\quad +\mu _3\paren{\brac{j,l}}+\mu _3\paren{\brac{k,l}}-1
\end{align*}
The possibilities are:\newline
$inj, ink, inl, jdk, jdl, kcl,\mu _3(A)=1/4$; $inj, ink, inl, jck, jcl,lck,\mu _3(A)\!=\!3/4$;
$inj, ink, idl, jdk, jnl, knl,\mu _3(A)=0$; $inj, ink, idl, jck, jnl, knl,\mu _3(A)=1/2$;
$inj, ink, icl, jck, jnl, knl,\mu _3(A)=1$; $idj, idk, idl, jck, jcl, kcl,\mu (A)=1/2$,
$idj, ick, icl, jdk, jdl, kcl,\mu (A)=1/2$.
The other possibilities coincide with one of these by symmetry. 
Of the 70 sets of cardinality~4 the only precluded ones are:
$\brac{0,2,1,5}$, $\brac{0,2,3,5}$, $\brac{0,2,5,7}$, $\brac{0,4,1,5}$, $\brac{0,4,3,5}$, $\brac{0,4,5,7}$,
$\brac{0,6,1,5}$, $\brac{0,6,3,5}$, $\brac{0,6,5,7}$.
%% RDS: 
There are no precluded events of cardinality $>4$ in $\Omega_3$.
\end{exam}

\begin{exam}{6}                    % Example 6
We compute some $q$-measures of events in $\Omega _4=\brac{0,1,\ldots ,15}$. Some precluded doubleton sets are
$\brac{0,2}$, $\brac{0,4}$, $\brac{2,10}$, $\brac{4,10}$. Moreover,
\begin{equation*}
\mu _4\paren{\brac{0,10}}=\mu _4\paren{\brac{2,4}}=1/4
\end{equation*}
It follows from \eqref{eq24} that $\brac{0,2,4,10}$ is precluded.
%% RDS: 
(In view of Theorem \ref{thm25} (b), this also follows from the fact that 
$\brac{0,2,4,10}$ is the disjoint union of the two precluded sets,
$\brac{0,2}$ and $\brac{4,10}$.)
\end{exam}

\section{Cylinder Sets} % Section 3
For $\omega\in\Omega _n$ we identify the pair $(\omega ,0)$ with the
string $\omega 0\in\Omega _{n+1}$ obtained by adjoining $0$ to the right
of the string $\omega$. Similarly we identify $(\omega ,1)$ with 
$\omega 1\in\Omega _{n+1}$. For example, $(011,0)=0110$ and $(011,1)=0111$. 
We can also identify
$\omega\times\brac{0,1}$ with the set $\brac{(\omega ,0),(\omega ,1)}\in\ascript _{n+1}$. In a similar way, for
$A\in\ascript _n$ we define $A\times\brac{0,1}\in\ascript _{n+1}$, by
\begin{equation*}
   A\times\brac{0,1}=\cup\brac{\omega\times\brac{0,1}\colon\omega\in A}
\end{equation*}

\begin{lem}       % Lemma 3.1
\label{lem31}
If $A\in\ascript _n$, then $\mu _{n+1}\paren{A\times\brac{0,1}}=\mu _n(A)$.
\end{lem}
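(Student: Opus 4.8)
The plan is to expand both sides in terms of the entries of the decoherence matrices and reduce the claimed equality to a single \emph{local} identity: that summing $D^{n+1}$ over the two one-step extensions of a fixed pair of $n$-paths reproduces $D^n$ on that pair. Concretely, since $A\times\brac{0,1}=\brac{\omega a\colon\omega\in A,\ a\in\brac{0,1}}$, where $\omega a$ denotes the $(n+1)$-path obtained by appending the symbol $a$ to $\omega$, the definition of $\mu_{n+1}$ gives
\[
\mu_{n+1}\paren{A\times\brac{0,1}}=\sum_{\omega,\omega'\in A}\ \sum_{a,a'\in\brac{0,1}}D^{n+1}(\omega a,\omega'a').
\]
So it suffices to establish that for every fixed pair $\omega,\omega'\in\Omega_n$ one has $\sum_{a,a'}D^{n+1}(\omega a,\omega'a')=D^n(\omega,\omega')$; summing this over $\omega,\omega'\in A$ then yields $\mu_{n+1}(A\times\brac{0,1})=D_n(A,A)=\mu_n(A)$.

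The local identity carries the real content. First I would note that $D^{n+1}(\omega a,\omega'a')$ contains the factor $\delta_{aa'}$ coming from the Kronecker delta in \eqref{eq21} (the last symbols of $\omega a$ and $\omega'a'$ are $a$ and $a'$), so only the diagonal terms $a=a'$ survive. Writing $\omega=\alpha_0\cdots\alpha_n$ and $\omega'=\alpha'_0\cdots\alpha'_n$, the change count satisfies $c_{n+1}(\omega a)=c_n(\omega)+\ab{a-\alpha_n}$, so by \eqref{eq22},
\[
\sum_{a\in\brac{0,1}}D^{n+1}(\omega a,\omega'a)=\frac{1}{2^{n+1}}\sum_{a\in\brac{0,1}}i^{\sqbrac{c_n(\omega)-c_n(\omega')}}\,i^{\sqbrac{\ab{a-\alpha_n}-\ab{a-\alpha'_n}}}.
\]
I would then split into two cases. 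If $\alpha_n=\alpha'_n$, the exponent $\ab{a-\alpha_n}-\ab{a-\alpha'_n}$ vanishes for both values of $a$, so the sum equals $\tfrac{1}{2^{n+1}}\cdot 2\,i^{\sqbrac{c_n(\omega)-c_n(\omega')}}=\tfrac{1}{2^n}i^{\sqbrac{c_n(\omega)-c_n(\omega')}}=D^n(\omega,\omega')$, using \eqref{eq22} with $\delta_{\alpha_n\alpha'_n}=1$. If $\alpha_n\ne\alpha'_n$, then as $a$ ranges over $\brac{0,1}$ the quantity $\ab{a-\alpha_n}-\ab{a-\alpha'_n}$ takes the values $+1$ and $-1$, so writing $d=c_n(\omega)-c_n(\omega')$ the two terms are $\tfrac{1}{2^{n+1}}i^{d\pm 1}$, and their sum is $\tfrac{1}{2^{n+1}}i^{d}\paren{i+i^{-1}}=0$, which matches $D^n(\omega,\omega')=0$ (here $\delta_{\alpha_n\alpha'_n}=0$).

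Thus in all cases $\sum_{a}D^{n+1}(\omega a,\omega'a)=D^n(\omega,\omega')$, and summing over $\omega,\omega'\in A$ completes the argument. The one genuinely nontrivial step is the cancellation $i+i^{-1}=0$ in the unequal-last-symbol case; this is the precise point at which the interference built into $U$ guarantees that coarse-graining away the final time step is consistent, i.e.\ that $D^{n+1}$ projects down onto $D^n$. Everything else is bookkeeping with the change-count formula $c_{n+1}(\omega a)=c_n(\omega)+\ab{a-\alpha_n}$ and the definitions of $D_n$ and $\mu_n$, so I expect no serious obstacle beyond keeping the two cases straight.
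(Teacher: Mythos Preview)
Your proof is correct and follows essentially the same route as the paper: expand $\mu_{n+1}(A\times\{0,1\})$ via $D^{n+1}$, use the final Kronecker delta to drop the $a\ne a'$ terms, and then observe that the two $a$-contributions double up when $\alpha_n=\alpha'_n$ and cancel via $i+i^{-1}=0$ when $\alpha_n\ne\alpha'_n$. Your organization around the local identity $\sum_a D^{n+1}(\omega a,\omega' a)=D^n(\omega,\omega')$ is a bit cleaner than the paper's case-by-case expansion over $A$, but the content is the same.
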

\begin{proof}
For $\omega\in\Omega _n$, let $a(\omega )=i^{c_n(\omega )}$. By \eqref{eq22} we have
\begin{equation*}
D_n(\omega ,\omega ')=\tfrac{1}{2^n}\,a(\omega )\overline{a(\omega ')}\delta _{\alpha _n\alpha '_n}
\end{equation*}
Hence,
\begin{align*}
\mu _{n+1}&\paren{A\times\brac{0,1}}\\
  &=D_{n+1}\paren{A\times\brac{0,1},A\times\brac{0,1}}\\
  &=\sum\brac{D_{n+1}(\omega ,\omega ')\colon\omega ,\omega '\in A\times\brac{0,1}}\\
  &=\sum \brac{D_{n+1}(\omega 0,\omega '0)\colon\omega ,\omega '\in A}\\
    &\quad +\sum\brac{D_{n+1}(\omega 1,\omega '1)\colon\omega ,\omega '\in A}\\
    &=\frac{1}{2^{n+1}}\,\Biggl[
    \sum\brac{a(\omega )\overline{a(\omega ')}\colon\omega ,\omega '\in A,
     \alpha _n,\alpha '_n=0\text{ or }\alpha _n,\alpha '_n=1}\Biggr.\\
    &\quad +i\sum\brac{a(\omega )\overline{a(\omega ')}\colon\omega ,\omega '\in A,\alpha _n=1,\alpha '_n=0}\\
    &\quad -i\sum\brac{a(\omega )\overline{a(\omega ')}\colon\omega ,\omega '\in A,\alpha _n=0,\alpha '_n=1}\\
    &\quad +\sum\brac{a(\omega )\overline{a(\omega ')}\colon\omega ,\omega '\in A,\alpha _n,\alpha '_n=0
      \text{ or }\alpha _n,\alpha '_n=1}\\
    &\quad -i\sum\brac{a(\omega )\overline{a(\omega ')}\colon\omega ,\omega '\in A,\alpha _n=1,\alpha '_n=0}\\
    &\quad +i\sum\Biggl.
    \brac{a(\omega )\overline{a(\omega ')}\colon\omega ,\omega '\in A,\alpha _n=0,\alpha '_n=1}
    \Biggr]\\
    &=\frac{1}{2^n}\sum\brac{a(\omega )\overline{a(\omega ')}\colon\omega ,\omega '\in A,
    \alpha _n,\alpha '_n=0\text{ or }\alpha _n,\alpha '_n=1}\\
    &=\sum\brac{D_n(\omega ,\omega ')\colon\omega ,\omega '\in A}=\mu _n(A)\qedhere
\end{align*}
\end{proof}

We use the notation $A^n=A\times A\cdots\times A$ ($n$ factors).

\begin{cor}       % Corollary 3.2
\label{cor32}
If $A\in\ascript _n$, then $\mu _{n+m}\paren{A\times\brac{0,1}^m}=\mu _n(A)$.
\end{cor}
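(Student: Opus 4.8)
The plan is to prove the statement by induction on $m$, using Lemma~\ref{lem31} both as the base case and as the engine of the inductive step. The key structural fact I would record first is the associativity of the padding operation: appending $m+1$ zeros-or-ones on the right of each string is the same as first appending $m$ of them and then one more, so that
$$A\times\brac{0,1}^{m+1}=\paren{A\times\brac{0,1}^m}\times\brac{0,1}.$$
This is immediate from the definition of $A\times\brac{0,1}$ as $\cup\brac{\omega\times\brac{0,1}\colon\omega\in A}$ together with the obvious fact that concatenation of strings (the identification $(\omega,0)\mapsto\omega 0$, $(\omega,1)\mapsto\omega 1$) is associative.

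For the base case $m=1$ the claim is precisely Lemma~\ref{lem31}. For the inductive step, fix $A\in\ascript_n$ and suppose the result holds for some $m\ge 1$, i.e. $\mu_{n+m}\paren{A\times\brac{0,1}^m}=\mu_n(A)$. The set $A\times\brac{0,1}^m$ is a subset of $\Omega_{n+m}$, hence a member of $\ascript_{n+m}$, so I may apply Lemma~\ref{lem31} with $n$ replaced by $n+m$ and with $A\times\brac{0,1}^m$ playing the role of $A$. This gives
$$\mu_{n+m+1}\paren{\paren{A\times\brac{0,1}^m}\times\brac{0,1}}=\mu_{n+m}\paren{A\times\brac{0,1}^m}.$$
Rewriting the left side by the associativity identity above and the right side by the inductive hypothesis yields $\mu_{n+m+1}\paren{A\times\brac{0,1}^{m+1}}=\mu_n(A)$, which completes the induction.

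I do not anticipate a serious obstacle here: each step of the induction is a single invocation of the already-established Lemma~\ref{lem31}, and the content of the corollary is essentially that Lemma~\ref{lem31} iterates. The only point that genuinely requires care is the bookkeeping of the string-padding identification, so that $A\times\brac{0,1}^m$ is correctly regarded as an element of $\ascript_{n+m}$ and the two successive padding operations compose as claimed. Once that identity is in hand, the argument is routine.
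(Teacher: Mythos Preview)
Your proposal is correct and is exactly the intended argument: the paper states Corollary~\ref{cor32} without proof immediately after Lemma~\ref{lem31}, plainly meaning it to follow by iterating that lemma, and your induction on $m$ with the associativity bookkeeping makes this precise. There is nothing to add.
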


\begin{cor}       % Corollary 3.3
\label{cor33}
If $A\in\ascript _n$ is precluded, then $A\times\brac{0,1}^m$ is also precluded.
\end{cor}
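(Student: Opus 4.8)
The plan is to read this off directly from Corollary~\ref{cor32}, since preclusion is by definition nothing more than the vanishing of the $q$-measure. First I would recall that $A\in\ascript_n$ being precluded means exactly $\mu_n(A)=0$. Then I would invoke Corollary~\ref{cor32}, which asserts $\mu_{n+m}\paren{A\times\brac{0,1}^m}=\mu_n(A)$ for every $A\in\ascript_n$. Substituting $\mu_n(A)=0$ gives $\mu_{n+m}\paren{A\times\brac{0,1}^m}=0$, and this is precisely the statement that $A\times\brac{0,1}^m$ is precluded in $\ascript_{n+m}$. So I would present the corollary as a one-line deduction.

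There is no genuine obstacle at this stage: all of the actual content lies upstream, in Corollary~\ref{cor32}, which in turn follows from Lemma~\ref{lem31} by a short induction on $m$. Writing $A\times\brac{0,1}^m=\paren{A\times\brac{0,1}^{m-1}}\times\brac{0,1}$ and applying Lemma~\ref{lem31} with the set $A\times\brac{0,1}^{m-1}\in\ascript_{n+m-1}$ playing the role of $A$ lowers the index from $m$ to $m-1$; iterating this step bottoms out at the base case $m=1$, which is Lemma~\ref{lem31} itself. Thus the only place where the detailed amplitude computation really enters is the proof of Lemma~\ref{lem31}, where the cross terms between the $\alpha_n=0$ and $\alpha_n=1$ sectors cancel and reduce the $(n+1)$-measure back to the $n$-measure.

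It is worth framing Corollary~\ref{cor33} as a compatibility statement: preclusion is stable under the cylinder refinement $A\mapsto A\times\brac{0,1}^m$, which is exactly the invariance one needs for $\mu$ to be consistently defined on the cylinder sets of $\Omega$. Consequently, once Corollary~\ref{cor32} is in hand, no further positivity estimate or interference calculation is required, and the proof is immediate.
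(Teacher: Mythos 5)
Your proof is correct and matches the paper's (implicit) reasoning exactly: the paper states Corollary~\ref{cor33} without proof precisely because it is the one-line consequence of Corollary~\ref{cor32} that you describe, with $\mu_{n+m}\paren{A\times\brac{0,1}^m}=\mu_n(A)=0$. Your sketch of how Corollary~\ref{cor32} follows from Lemma~\ref{lem31} by induction on $m$ is also the intended route.
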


\REMARK
%\begin{remark}
Lemma \ref{lem31} and its corollaries are valid for any finite unitary
system in the sense of \cite{djsu10}.  Indeed Corollary \ref{cor33} can
be seen as a special case of a much stronger assertion that holds for
such systems:
{\it If $A$ is precluded and $B$ is any subsequent event then the event} 
$C=(A$ and $B)$ 
{\it is also precluded}.
Here, the condition that $B$ be {\it subsequent to} $A$, 
means more precisely the following.  
By definition any event $A$ is a set of histories or paths, and if these
paths are singled out by a condition that concerns their behavior only
for times $t<t_0$, we will say that $A$ is ``earlier than'' $t_0$.
Defining events later than $t_0$ analogously, we then say that $B$ is
subsequent to $A$ if for some $t_0$, $A$ is earlier than $t_0$ and $B$
is later.  The event $(A$ and $B)$ is of course simply the intersection
$A\cap B$ expressed as a logical conjunction.  We can also write it as
$(A$ and-then $B)$
in order to emphasize that
$B$ is meant to be subsequent to $A$.
The preservation of preclusion by `and-then' can be viewed as
a kind of causality condition.  
When this condition is fulfilled, one can
correlate to any
event $B$ later than $t_0$ and earlier than $t_1$ 
a linear operator 
from the Hilbert space associated with times $t<t_0$ 
to that associated with times $t<t_1$~.
In a situation like that of Cor.~\ref{cor33}, 
the earlier (resp. later) Hilbert space would be 
that associated to $\ascript_n$ (resp. $\ascript_{n+m}$). 
%\end{remark}

\medskip

\begin{exam}{7}                    % Example 7
In Example 2 we saw that $\brac{0,2}\in\ascript _2$ is precluded. Applying Corollary~\ref{cor33} shows that
$\brac{0,4,1,5}\in\ascript _3$ is also precluded. Applying Corollary~\ref{cor33} again shows that
$\brac{0,8,2,10,1,9,3,11}\in\ascript _4$ is precluded.
\end{exam}

Using our previously established notation we can write $\Omega =\brac{0}\times\brac{0,1}\times\cdots$ or
$\Omega =\Omega _n\times\brac{0,1}\times\brac{0,1}\times\cdots$. A subset $A\subseteq\Omega$ is a
\textit{cylinder set} if there exists a $B\in\ascript _n$ for some $n\in\positive$ such that
$A=B\times\brac{0,1}\times\brac{0,1}\times\cdots$. Thus, the first $n+1$
bits for strings in $A$ are restricted and the further bits are not. For
$\omega\in\Omega _n$, we call 
$\rmcyl (\omega )=\brac{\omega}\times\brac{0,1}\times\brac{0,1}\times\cdots$
an \textit{elementary cylinder set}.\footnote% Sorkin footnote #2
{In references \cite{gfrjr02, gfrjr03, sor111} the term cylinder-set is reserved for what are
 here called {\it elementary} cylinder sets.}
If $\omega =\alpha_0\alpha _1\cdots\alpha _m\in\Omega _m$ and
$\omega '=\alpha _0\alpha _1\cdots\alpha _m\alpha _{m+1}\cdots\alpha _n\in\Omega _n$, $m\le n$ we say that
$\omega'$ is an \textit{extension} of $\omega$. We have that $\rmcyl
(\omega ')\subseteq\rmcyl (\omega )$ if and only if $\omega '$ is an
extension of $\omega$ and $\rmcyl (\omega ')\cap\rmcyl (\omega)=\emptyset$ 
if and only if neither 
$\omega$ or $\omega '$ is an extension of the other. 
(Thus any two {elementary cylinder sets} are either disjoint or nested.)
Moreover, any cylinder set is a finite disjoint union of elementary cylinder sets.

We denote the collection of all cylinder sets by $\cscript (\Omega )=\cscript$. If $A\in\cscript$ then its complement $A'$ is clearly in $\cscript$. Similarly, $\cscript$ is closed under finite unions and finite intersections so $\cscript$ is an algebra of subsets of $\Omega$. Of course, there are subsets of $\Omega$ that are not in $\cscript$; for example,
$\brac{\omega}\notin\cscript$ for $\omega\in\Omega$. For $A\in\cscript$ if
$A=B\times\brac{0,1}\times\brac{0,1}\times\cdots$ with $B\in\ascript _n$ we define $\mu (A)=\mu _n(B)$. To show that $\mu\colon\cscript\to\real ^+$ is well-defined, suppose $A=B_1\times\brac{0,1}\times\brac{0,1}\times\cdots$ with $B_1\in\ascript _m$. If $m=n$, then $B=B_1$ and we're finished. Otherwise, we can assume that $m<n$. It follows that $B=B_1\times\brac{0,1}^{n-m}$. Hence, $\mu _n(B)=\mu _m(B_1)$ by Corollary~\ref{cor32} so $\mu$ is well-defined. It is clear that $\mu\colon\cscript\to\real ^+$ satisfies Conditions (a) and ( b) of Theorem~\ref{thm25} so we can consider $\mu$ as a $q$-measure on $\cscript$.

As before, we say that $A\in\cscript$ is \textit{precluded} if $\mu (A)=0$. We also say that $B\in\cscript$ is
\textit{stymied} if $B\subseteq A$ for some precluded $A\in\cscript$. Of course a precluded set is stymied but there are many stymied sets that are not precluded. For instance, by Example~2, $\rmcyl (000)$ and $\rmcyl (101)$ are not precluded but are stymied. It is clear that $\mu (\Omega )=1$ and $\Omega$ is not stymied. Surprisingly, it is shown in \cite{djsu10} that $\Omega$ is the only set in $\cscript$ that is not stymied.

Let $A_1\supseteq A_2\supseteq\cdots$ be a decreasing sequence in $\cscript$ with $A=\cap A_i\in\cscript$. (In general, $A$ need not be in $\cscript$.) We shall show in the proof of Theorem~4.1 that $\Omega$ is compact in the product topology and that every element of $\cscript$ is compact. Letting
$B_i=A_i\smallsetminus A$, since $B_i\in\cscript$ we conclude that $B_i$ is compact, $i=1,2,\ldots$, and that $B_1\supseteq B_2\supseteq \cdots$ with $\cap B_i=\emptyset$. It follows that $B_m=\emptyset$ for some
$m\in\positive$. Hence, $A_m=A$ so $A_i=A$ for $i\ge m$. We conclude that
\begin{equation}         % equation (3.1)
\label{eq31}
\lim\mu (A_i)=\mu\paren{\cap A_i}
\end{equation}
Now let $A_1\subseteq A_2\subseteq\cdots$ be an increasing sequence in $\cscript$ with $\cup A_i\in\cscript$. By taking complements of our previous work we have
\begin{equation}         % equation (3.2)
\label{eq32}
\lim _{i\to\infty} \mu (A_i)=\mu\paren{\cup A_i}
\end{equation}
Since $\mu$ satisfies \eqref{eq31} and \eqref{eq32} we say that $\mu$ is \textit{continuous} on $\cscript$.

Let $\ascript$ be the $\sigma$-algebra generated by $\cscript$. If $\mu$
were a finitely additive probability measure on $\cscript$ satisfying
\eqref{eq31} or \eqref{eq32} , then by the Kolmogorov extension theorem,
$\mu$ 
would have
%% has 
a unique extension to a (countably additive) probability
measure on $\ascript$. The next example shows that this extension
theorem does not hold for $q$-measures; that is, $\mu$ does not have an
extension to a continuous $q$-measure on 
$\ascript$.

\begin{exam}{8}                    % Example 8
Let
\begin{equation*} 
B_1=\brac{0010,0100,0110}=\brac{2,4,6}\in\ascript _3
\end{equation*}
As in Example~5 $\mu _3(B_1)=9/8$. Letting $B_2=\brac{010,100,110}$ we have that
\begin{equation*} 
B_1\times B_2=\brac{0010,0100,0110}\times\brac{010,100,110}\in\ascript _6
\end{equation*}
A simple calculation shows that $\mu _6 (B_1\times B_2)=(9/8)^2$ and continuing,
$\mu _9 (B_1\times B_2\times B_2)=(9/8)^3$. Defining $A_i\in\cscript$ by
$A_1=B_1\times\brac{0,1}\times\brac{0,1}\times\cdots$,
$A_2=B_1\times B_2\times\brac{0,1}\times\brac{0,1}\times\cdots$,
$A_3=B_1\times B_2\times B_2\times\brac{0,1}\times\brac{0,1}\times\cdots$ we have that
$A_1\supseteq A_2\supseteq\cdots$. However, $\mu (A_i)=(9/8)^i$ so $\lim _{i\to\infty}\mu (A_i)=\infty$. Hence, if
$\mu$ had an extension to $\ascript$, then \eqref{eq31} would fail.
\end{exam}

Another way to show that $\mu$ does not extend to $\ascript$ is given in \cite{djsu10}. We define the
\textit{total variation} $\ab{\mu}$ of $\mu$ by 
\begin{equation*} 
\ab{\mu}(A)=\sqbrac{\sup _{\pi (A)}\sum _i\mu (A_i)^{1/2}}^2
\end{equation*}
for all $A\in\cscript$ where the supremum is over all finite partitions $\pi (A)=\brac{A_1,\ldots ,A_n}$ of $A$ with
$A_i\in\cscript$. We say that $\mu$ is of \textit{bounded variation} if $\ab{\mu (A)}<\infty$ for all $A\in\cscript$. It is shown in \cite{djsu10} that if $\mu$ has an extension to a continuous $q$-measure on $\ascript$, then $\mu$ must be of bounded variation. It is proved in \cite{djsu10} that for any finite unitary system, $\mu$ is not of bounded variation. Although this proof is difficult for our particular case it is simple.

\begin{exam}{9}                    % Example 9
We show that $\mu$ is not of bounded variation. For $0,1,\ldots ,2^n-1\in\Omega _n$ we have the partition of
$\Omega$
\begin{align*}
\Omega&=\bigcup _{i=1}^{2^n-1}\rmcyl (i)\\
\intertext{and}
\sum _{i=0}^{2^n-1}\mu\sqbrac{\rmcyl (i)}^{1/2}&=\sqrt{2^n\,}
\end{align*}
Hence, 
$\ab{\mu}(\Omega )\ge 2^n$ 
for all $n\in\positive$ so
$\ab{\mu}(\Omega )=\infty$. A similar argument shows that
$\ab{\mu}(A)=\infty$ for all $A\in\cscript$, $A\ne\emptyset$. 
\end{exam}

Although we cannot extend $\mu$ to a continuous $q$-measure on
$\ascript$, perhaps we can extend $\mu$ to physically interesting sets
in $\ascript\smallsetminus\cscript$. 
We now discuss a possible way to accomplish this.\footnote% Sorkin footnote #3
{Some of the ideas expressed here in embryo are developed further in \cite{sor111}.}
For $\omega =\alpha _0\alpha _1\cdots\in\Omega$ and
$A\subseteq\Omega$ we write $\omega (n)A$ if there is an 
$\omega '=\beta _0\beta _1\cdots\in A$ such that $\beta _i=\alpha _i$,
$i=0,1,\ldots ,n$. 
We then define 
\begin{equation*} 
     A^{(n)} = \brac{\omega\in\Omega\colon\omega (n)A}
\end{equation*}
Notice that 
$A^{(n)}\in\cscript$, 
$A^{(0)}\supseteq A^{(1)}\supseteq A^{(2)}\supseteq\cdots$, and
$A\subseteq\cap A^{(n)}$. 
We think of $A^{(n)}$ as 
a particular sort of
%% the 
time-$n$ cylindrical approximation to $A$. 
We say that
$A\subseteq\Omega$ is  
a \textit{lower set} 
if $A=\cap A^{(n)}$;
and we call $A$ \textit{beneficial} if
$\lim\mu\paren{A^{(n)}}$ exists and is finite. %\footnote% Sorkin footnote #4 - removed
%
%{Instead of ``lower'' one could also say ``closed'', in the
  %topological sense employed in the next section.  A beneficial set is
  %then a closed set for which the q-measures $\mu\paren{A^{(n)}}$ have a
  %limit.  One probably wants to strengthen this condition to the
  %requirement that the hilbert space vectors $\escript\paren{A^{(n)}}$
  %have a limit.  See theorem~\ref{thm23}.}
%
We denote the collection of lower 
% closed
sets by $\lscript$, the collection of
beneficial sets by $\bscript$ and write 
$\bscript_{\lscript}=\bscript\cap\lscript$. 
If $A\in\bscript$, 
we define $\muhat (A)=\lim\mu\paren{A^{(n)}}$.

The next section considers algebraic structures but for now we mention that Example~9 to follow shows that
$\lscript$ is not closed under ${}'$ so is not an algebra. Since $\ascript$ is closed under countable intersections,
$\lscript\subseteq\ascript$. If $A\in\cscript$, then $A=A^{(n)}=A^{(n+1)}=\cdots$ for some $n\in\positive$. Hence,
$A=\cap A^{(n)}$ and $\mu (A)=\lim\mu\paren{A^{(n)}}=\muhat (A)$. Thus,
$\cscript\subseteq\bscript _{\lscript}$ and the definition of $\muhat$
on $\bscript _{\lscript}$ reduces to the usual definition of $\mu$ on
$\cscript$. The following result shows that $\brac{\omega}\in\bscript
_{\lscript}$ and $\muhat\paren{\brac{\omega}}=0$ for all 
$\omega\in\Omega$. 
We conclude that $\bscript _{\lscript}$ properly contains $\cscript$.

\begin{lem}       % Lemma 3.4
\label{lem34}
If $A\subseteq\Omega$ with $\ab{A}<\infty$, then $A\in\bscript _{\lscript}$ and $\muhat (A)=0$.
\end{lem}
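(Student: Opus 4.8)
The plan is to prove both assertions — that $A$ is a lower set and that it is beneficial with $\muhat(A)=0$ — from an explicit description of the approximations $A^{(n)}$. Write $A=\brac{\omega^{(1)},\ldots,\omega^{(k)}}$ and let $\sigma^{(j)}_n\in\Omega_n$ denote the truncation of $\omega^{(j)}$ to its first $n+1$ bits. Unwinding the definition of $A^{(n)}$ gives at once
$$A^{(n)}=\bigcup_{j=1}^k\rmcyl\paren{\sigma^{(j)}_n},$$
a finite union of elementary cylinder sets, which is the common starting point for both parts.

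For the lower-set property it suffices to show $\cap A^{(n)}\subseteq A$, the reverse inclusion being automatic. Suppose $\omega\in\cap A^{(n)}$. Then for each $n$ there is some $\omega'_n\in A$ agreeing with $\omega$ on bits $0,\ldots,n$. Since $A$ is finite, a single element $\omega^*\in A$ must occur as $\omega'_n$ for infinitely many $n$; as those $n$ are unbounded, $\omega$ agrees with $\omega^*$ in every bit, whence $\omega=\omega^*\in A$. Therefore $A=\cap A^{(n)}$ and $A\in\lscript$.

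For the beneficial part I would estimate $\mu\paren{A^{(n)}}=\mu_n\paren{\brac{\sigma^{(1)}_n,\ldots,\sigma^{(k)}_n}}$. Because the $\omega^{(j)}$ are distinct they diverge at finitely many bit positions, so once $n$ is large enough the $\sigma^{(j)}_n$ are $k$ distinct points of $\Omega_n$ and \eqref{eq24} applies. Each singleton contributes $1/2^n$, and by \eqref{eq25} every pairwise value $\mu_n\paren{\brac{\sigma^{(i)}_n,\sigma^{(j)}_n}}$ is at most $1/2^{n-2}$; since the subtracted term $(k-2)\sum_j\mu_n\paren{\sigma^{(j)}_n}$ is nonnegative, \eqref{eq24} gives
$$0\le\mu\paren{A^{(n)}}\le\binom{k}{2}\frac{1}{2^{n-2}}=\frac{2k(k-1)}{2^n},$$
the lower bound coming from $\mu_n\ge 0$. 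Letting $n\to\infty$ forces $\lim\mu\paren{A^{(n)}}=0$, so $A\in\bscript$ with $\muhat(A)=0$; combined with the previous paragraph, $A\in\bscript_{\lscript}$.

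The cases $k\le 2$ are routine, handled directly from $\mu_n(\omega)=1/2^n$ and \eqref{eq25} rather than from \eqref{eq24}. The only point requiring care is making the estimate uniform: one must observe that \eqref{eq24} needs $k$ distinct points with $k\le n$, both of which hold once $n$ exceeds both $k$ and the largest position at which any two of the $\omega^{(j)}$ first differ. Past that threshold the bound is a direct consequence of grade-$2$ additivity together with the explicit pairwise values of \eqref{eq25}, so I anticipate no serious obstacle.
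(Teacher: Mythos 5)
Your proof is correct, and its first half is essentially the paper's: the paper argues contrapositively that any $\omega\notin A$ already leaves $A^{(n)}$ for some finite $n$ (finiteness of $A$ supplying the uniform bit position), while your pigeonhole on $\cap A^{(n)}$ extracts a single $\omega^*\in A$ agreeing with $\omega$ at infinitely many, hence all, bits — the two arguments are interchangeable. Where you genuinely diverge is in estimating $\mu\paren{A^{(n)}}$. The paper works directly with the decoherence functional: writing $A^{(n)}=B_n\times\brac{0,1}\times\brac{0,1}\times\cdots$ with $\ab{B_n}\le m$, it bounds $\mu_n(B_n)=\sum\brac{D_n(\omega,\omega')\colon\omega,\omega'\in B_n}\le m^2/2^n$, since every entry of $D^n$ has modulus at most $1/2^n$ by Theorem~\ref{thm21}. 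You instead stay at the level of the $q$-measure itself, expanding via grade-2 additivity \eqref{eq24} and the pairwise values \eqref{eq25} to get $\mu\paren{A^{(n)}}\le 2k(k-1)/2^n$. Both bounds vanish, and yours is looser only by a harmless constant, but your route costs extra bookkeeping: you must wait until the truncations $\sigma^{(j)}_n$ are distinct and $n\ge k$ so that \eqref{eq24} applies, and you must treat $k\le 2$ separately — all of which you correctly flagged. The paper's entrywise bound sidesteps all of this ($\ab{B_n}\le m$ suffices even when truncations collide, which in any case only lowers the bound) and transfers verbatim to any system whose decoherence-matrix entries are $O(1/2^n)$; your route has the compensating virtue of invoking only facts already packaged for $\mu_n$ — nonnegativity from strong positivity together with \eqref{eq24} and \eqref{eq25} — without returning to the matrix $D^n$ at all.
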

\begin{proof}
Suppose that $\omega =\alpha _0\alpha _1\cdots\notin A$. Then there exists an $n\in\positive$ such that
$\alpha _0\alpha _1\cdots\alpha _n$ is different from the first $n$ bits of all $\omega '\in A$. But then
$\omega\notin A^{(n)}$ so $\omega\notin\cap A^{(n)}$. Hence, $A=\cap A^{(n)}$. If $\ab{A}=m$, then
$A^{(n)}=B_n\times\brac{0,1}\times\brac{0,1}\times\cdots$, $B_n\in\ascript _n$ with $\ab{B_n}\le m$,
$n=0,1,2,\ldots\,$. Hence
\begin{equation*} 
\mu (A^{(n)})=\mu _n(B_n)=D_n(B_n,B_n)=\sum\brac{D_n(\omega ,\omega ')\colon\omega ,\omega '\in B_n}
  \le\tfrac{m^2}{2^n}
\end{equation*}
Hence, $\lim\mu\paren{A^{(n)}}=0$. We conclude that $A\in\bscript _{\lscript}$ and $\muhat (A)=0$.
\end{proof}

\begin{exam}{10}                    % Example 10
Let $A\subseteq\Omega$ with $\ab{A}<\infty$, $A\ne\emptyset$. We then have that ${A'}^{(n)}=\Omega$,
$n=0,1,2,\ldots\,$. Hence, $A'\ne\cap{A'}^{(n)}=\Omega$ so $A'\notin\lscript$. This shows that $\lscript$ is not an algebra. This also shows that $\bscript _{\lscript}$ is not an algebra.
\end{exam}

\begin{exam}{11}                    % Example 11
Define the set
\begin{equation*}
A=\brac{\omega\in\Omega\colon\omega\text{ has at most one }1}
\end{equation*}
We have that
\begin{equation*}
  A^{(n)}=\brac{00\cdots 0,010\cdots 0,0010\cdots 0,\cdots ,00\cdots 01}
\end{equation*}
It is clear that $A=\cap A^{(n)}$. Also, we have
\begin{equation*}
\mu\paren{A^{(n)}}=\tfrac{1}{2^n}\sqbrac{n+1-2(n+1)+2\binom{n-1}{2}}=\tfrac{1}{2^n}(n^2-4n+5)
\end{equation*}
Hence, $\lim\mu\paren{A^{(n)}}=0$ so $A\in\bscript _{\lscript}$ and $\muhat (A)=0$.
\end{exam}

\section{Quadratic Algebras} % Section 4
This section discusses algebraic structures for the collections
$\lscript$, $\bscript$ and $\bscript _{\lscript}$. A collection $Q$ of
subsets of a set $S$ is a \textit{quadratic algebra} if $\emptyset, S\in
Q$ and if $A,B,C\in Q$ are mutually disjoint and $A\cup B,A\cup C,B\cup
C\in Q$, then $A\cup B\cup C\in Q$. If $Q$ is a quadratic algebra,  
a $q$-\textit{measure} on $Q$ is a map $\nu\colon Q\to\real ^+$ such that if $A,B,C\in Q$ are mutually disjoint and
$A\cup B,A\cup C,B\cup C\in Q$, then
\begin{equation*}
\nu(A\cup B\cup C)=\nu (A\cup B)+\nu (A\cup C)+\nu (B\cup C)-\nu (A)-\nu (B)-\nu (C)
\end{equation*}

\begin{exam}{12}                    % Example 12
Let $S=\brac{d_1,d_2,d_3,u_1,u_2,u_3,s_1,s_2,s_3}$ and define $Q\subseteq 2^S$ by $\emptyset,S\in Q$ and
$A\in Q$ if and only if each of the three types of elements have different cardinalities in $A$, $A\ne\emptyset ,S$. For instance,
\begin{equation*}
\brac{u_1,d_1,d_2},\brac{u_1,d_1,d_2,s_1,s_2,s_3}\in Q
\end{equation*}
and these are the only kinds of sets in $Q$ besides $\emptyset ,S$. Although $Q$ is closed under complementation, it is not closed under disjoint unions so $Q$ is not an algebra. For instance $\brac{u_1,d_1,d_2},\brac{u_2,s_1,s_2}\in Q$ but
\begin{equation*}
\brac{u_1,u_2,d_1,d_2,s_1,s_2}\notin Q
\end{equation*}
To show that $Q$ is a quadratic algebra, suppose $A,B,C\in Q$ are mutually disjoint and
$A\cup B,A\cup C,B\cup C\in Q$. If one or more of $A,B,C$ are empty then clearly,
$A\cup B\cup C\in Q$ so suppose $A,B,C\ne\emptyset$. Since $\ab{A}=\ab{B}=\ab{C}=3$, we have
$A\cup B\cup C=S\in Q$. An example of a $q$-measure on $Q$ is $\nu (\emptyset )=0$, $\nu (S)=1$, $\nu (A)=1/6$ if $\ab{A}=3$ and $\nu (A)=1/2$ if $\ab{A}=6$. If $A,B,C$ are mutually disjoint nonempty sets in $Q$, then
$\nu (A\cup B\cup C)=\nu (\Omega )=1$ and
\begin{equation*}
\nu(A\cup B)+\nu (A\cup C)+\nu (B\cup C)-\nu (A)-\nu (B)-\nu (C)=\tfrac{3}{2}-\tfrac{1}{2}=1
\end{equation*}
Notice that $\nu$ is not additive because
\begin{equation*}
\nu\paren{\brac{u_1,d_1,d_2}}+\nu\paren{\brac{u_2,u_3,s_1}}=\tfrac{1}{3}\ne\tfrac{1}{2}
  =\nu\paren{\brac{u_1,u_2,u_3,d_1,d_2,s_1}}
\end{equation*}
\end{exam}

\begin{exam}{13}                    % Example 13
Let $S=\brac{x_1,\ldots ,x_n,y_1,\ldots ,y_m}$ where $n$ is odd. Let
\begin{equation*}
Q=\brac{A\subseteq S\colon\ab{\brac{x_i\colon x_i\in A}}=0\text{ or odd}}
\end{equation*}
Notice that $Q$ is not closed under complementation, unions (even disjoint unions) or intersections. To show that
$Q$ is a quadratic algebra, suppose $A,B,C\in Q$, are mutually disjoint and $A\cup B,A\cup C,B\cup C\in Q$. Since $A\cup C,B\cup C\in Q$ at most one of $A,B,C$ has an odd number of $x_i$s and the other contain no $x_i$s. Hence, $A\cup B\cup C\in Q$. An example of a nonadditive $q$-measure on $Q$ is
$\nu (A)=\ab{A}^2$. In fact, $\nu (A)=\ab{A}^2$ is a $q$-measure on any finite quadratic algebra.
\end{exam}

\begin{thm}       % Theorem 4.1
\label{thm41}
$\lscript$ and $\bscript _{\lscript}$ are quadratic algebras and $\muhat$ is a $q$-measure on $\bscript _{\lscript}$ that extends $\mu$ on $\cscript$.
\end{thm}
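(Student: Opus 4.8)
The plan is to treat the three assertions in turn, isolating the topological input first since it is what makes the relevant limits behave. I would begin by recording two facts about the product topology on $\Omega$. Viewing $\Omega$ as the subset of $\brac{0,1}^{\positive}$ consisting of sequences beginning with $0$, Tychonoff's theorem makes $\Omega$ compact. Each elementary cylinder set $\rmcyl(\omega)$ is clopen, and since every $A\in\cscript$ is a finite disjoint union of elementary cylinder sets, every member of $\cscript$ is clopen and hence compact. These are precisely the compactness claims promised in the discussion preceding \eqref{eq31}.

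Next I would establish the purely set-theoretic closure of $\lscript$. The key observations are that the approximation operation is additive on unions, $(A\cup B)^{(n)}=A^{(n)}\cup B^{(n)}$, and that for fixed $A$ the sequence $A^{(n)}$ is decreasing. Given lower sets $A,B,C$, I claim $\cap_n\paren{A^{(n)}\cup B^{(n)}\cup C^{(n)}}=A\cup B\cup C$; the inclusion $\supseteq$ is immediate, while for $\subseteq$ a pigeonhole argument shows that any $\omega$ in the left-hand side lies in one of $A^{(n)},B^{(n)},C^{(n)}$ for infinitely many $n$, whence by monotonicity for \emph{all} $n$, so that $\omega$ lies in the corresponding lower set. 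Since $(A\cup B\cup C)^{(n)}=A^{(n)}\cup B^{(n)}\cup C^{(n)}$, this shows that a finite union of lower sets is a lower set; together with $\emptyset,\Omega\in\lscript$ this already gives that $\lscript$ is a quadratic algebra (here the hypotheses on the pairwise unions are not even needed).

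The heart of the proof is a disjointness lemma: if $A,B$ are disjoint lower sets, then $A^{(n)}\cap B^{(n)}=\emptyset$ for all large $n$. Indeed $A^{(n)}\cap B^{(n)}$ is a decreasing sequence of compact sets in $\cscript$, and $\cap_n\paren{A^{(n)}\cap B^{(n)}}=\paren{\cap_n A^{(n)}}\cap\paren{\cap_n B^{(n)}}=A\cap B=\emptyset$; were every one of these sets nonempty, the finite intersection property would force a nonempty intersection, a contradiction, and monotonicity then propagates emptiness to all larger $n$. Applying this to each pair, three mutually disjoint lower sets $A,B,C$ have $A^{(n)},B^{(n)},C^{(n)}$ pairwise disjoint once $n$ exceeds some threshold $N$. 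For such $n$ I can apply grade-$2$ additivity (Theorem~\ref{thm25}(a), in the cylinder-set form valid for $\mu$ on $\cscript$) to the disjoint cylinder sets $A^{(n)},B^{(n)},C^{(n)}$ and rewrite each pairwise union via $(A\cup B)^{(n)}=A^{(n)}\cup B^{(n)}$, obtaining for all $n\ge N$
\begin{equation*}
\mu\paren{(A\cup B\cup C)^{(n)}}=\mu\paren{(A\cup B)^{(n)}}+\mu\paren{(A\cup C)^{(n)}}+\mu\paren{(B\cup C)^{(n)}}-\mu\paren{A^{(n)}}-\mu\paren{B^{(n)}}-\mu\paren{C^{(n)}}.
\end{equation*}

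Finally I would let $n\to\infty$. When $A,B,C,A\cup B,A\cup C,B\cup C$ all lie in $\bscript_{\lscript}$, the six terms on the right converge to $\muhat$ of the respective sets, so the left side converges as well. This shows simultaneously that $A\cup B\cup C$ is beneficial (and, being a finite union of lower sets, lies in $\lscript$), so that $\bscript_{\lscript}$ is a quadratic algebra, and that $\muhat$ satisfies the grade-$2$ identity defining a $q$-measure. Nonnegativity of $\muhat$ is inherited from that of $\mu$, and the fact that $\muhat$ extends $\mu$ on $\cscript$ was already noted, since $A^{(n)}=A$ eventually for $A\in\cscript$. I expect the disjointness lemma to be the main obstacle: grade-$2$ additivity is available only for disjoint pieces, yet the approximations $A^{(n)}$ of disjoint sets need not be disjoint in general, so the compactness argument is exactly what rescues the computation.
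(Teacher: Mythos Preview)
Your proposal is correct and follows essentially the same route as the paper: Tychonoff gives compactness of $\Omega$ and of cylinder sets, a compactness argument upgrades disjointness of lower sets to eventual disjointness of their cylindrical approximations, and then grade-2 additivity of $\mu$ on $\cscript$ together with $(A\cup B)^{(n)}=A^{(n)}\cup B^{(n)}$ yields both membership of $A\cup B\cup C$ in $\bscript_{\lscript}$ and the $q$-measure identity for $\muhat$ in one stroke.

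The one place you deviate is in handling $\lscript$: the paper invokes the eventual disjointness (hence implicitly the disjointness hypothesis) together with a ``distributive law'' to get $A\cup B\cup C=\cap\,(A\cup B\cup C)^{(m)}$, whereas your pigeonhole/monotonicity argument shows directly that any finite union of lower sets is a lower set, without using disjointness or the pairwise-union hypotheses at all. That is a mild strengthening and a cleaner separation of concerns, but it does not change the overall architecture of the proof.
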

\begin{proof}
Placing the discrete topology on $\brac{0,1}$, since $\brac{0,1}$ is compact, by Tychonov's theorem
$\Omega =\brac{0,1}\times\brac{0,1}\times\cdots$ is compact in the product topology. Any cylinder set is closed
(and open) and hence is compact. Let $A,B\in\lscript$ with $A\cap B=\emptyset$. Since
$A=\cap A^{(n)}$, $B=\cap B^{(n)}$ we have that
\begin{equation*}
\cap\paren{A^{(n)}\cap B^{(n)}}=\paren{\cap A^{(n)}}\cap\paren{\cap B^{(n)}}=A\cap B=\emptyset
\end{equation*}
Since $A^{(n)}\cap B^{(n)}$ is a decreasing sequence of compact sets
with empty intersection, 
there exists an
$n\in\positive$ such that $A^{(m)}\cap B^{(m)}=\emptyset$ for $m\ge n$. 
Now let $A,B,C\in\lscript$ be mutually disjoint with $A\cup B, A\cup C,B\cup C\in\lscript$. 
By our previous work there exists an
$n\in\positive$ such that 
$A^{(m)},B^{(m)},C^{(m)}$ are mutually disjoint for $m\ge n$. 
By the distributive law we have
\begin{align*}
  A\cup B\cup C&=\paren{\cap A^{(m)}}\cup\paren{\cap B^{(m)}}\cup\paren{\cap C^{(m)}}\\
  &=\cap\paren{A^{(m)}\cup B^{(m)}\cup C^{(m)}}=\cap\sqbrac{(A\cup B\cup C)^{(m)}}
\end{align*}
Hence, $\lscript$ is a quadratic algebra. If $A,B,C,A\cup B,A\cup C,B\cup C\in\bscript _{\lscript}$ with $A,B,C$ disjoint, since $A^{(m)},B^{(m)},C^{(m)}$ are eventually disjoint we conclude that
\begin{align*}
\lim\mu&\sqbrac{(A\cup B\cup C)^{(m)}}=\lim\mu\sqbrac{A^{(m)}\cup B^{(m)}\cup C^{(m)}}\\
  &=\lim\mu\paren{A^{(m)}\cup B^{(m)}}+\lim\mu\paren{A^{(m)}\cup C^{(m)}}+\lim\paren{B^{(m)}\cup C^{(m)}}\\
  &\quad -\lim\mu (A^{(m)})-\lim\mu (B^{(m)})-\lim\mu (C^{(m)})\\
  &=\muhat (A\cup B)+\muhat (A\cup C)+\muhat (B\cup C)-\muhat (A)-\muhat (B)-\muhat (C)
\end{align*}
Hence, $A\cup B\cup C\in\bscript _{\lscript}$ so $\bscript _{\lscript}$ is a quadratic algebra. Also,
\begin{equation*}
\muhat (A\cup B\cup C)=\muhat (A\cup B)+\muhat (A\cup C)+\muhat (B\cup C)-\muhat (A)-\muhat (B)-\muhat (C)
\end{equation*}
so $\muhat$ is a $q$-measure on $\bscript _{\lscript}$ that extends $\mu$ on $\cscript$.
\end{proof}

We say that $A\subseteq\Omega$ is an \textit{upper set} 
if $A=\cup A^{\prime (n)\prime}$ 
and denote the collection of upper sets by $\uscript$. 
Since $A^{'(n)}$ is a decreasing sequence of cylinder sets, 
we conclude if $A\in\uscript$ 
then $A$ is the union of an increasing sequence of cylinder sets $A^{'(n)'}$. 
For example, if $\ab{A}<\infty$ we have shown that $A\in\lscript$ so that $A=\cap A^{(n)}$. 
Hence, 
\begin{equation*}
  A'=\cup A^{(n)\prime}=\cup (A')^{\prime (n)\prime}
\end{equation*}
It follows that $A'\in\uscript$ so $\uscript$ properly contains $\cscript$. 
Moreover, $A'\notin\lscript$ so $\uscript\notin\lscript$.

\begin{lem}       % Lemma 4.2
\label{lem42}
Suppose $B\subseteq\Omega$ and there exists a decreasing sequence $C_i\in\cscript$ and an increasing sequence $D_i\in\cscript$ such that
\begin{equation*}
B=\cap C_i=\cup D_i
\end{equation*}
Then $B\in\cscript$.
\end{lem}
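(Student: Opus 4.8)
The plan is to show that $B$ is simultaneously closed and open in the product topology, and then to exploit compactness to collapse the increasing union $\cup D_i$ down to a single cylinder set. The facts I will rely on are exactly those established in the proof of Theorem~\ref{thm41}: $\Omega$ is compact in the product topology, and every cylinder set is clopen (both closed and open), hence compact.

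First I would use the representation $B=\cap C_i$. Each $C_i\in\cscript$ is closed, so $B$ is an intersection of closed sets and is therefore closed; as a closed subset of the compact space $\Omega$, it is compact. Next I would bring in the other representation $B=\cup D_i$, where each $D_i\in\cscript$ is open. The family $\brac{D_i}$ then covers $B$, so $\brac{D_i}$ is an open cover of the compact set $B$.

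Applying compactness, this cover admits a finite subcover $D_{i_1},\ldots,D_{i_k}$, and because the sequence $D_i$ is increasing, the union of any finitely many of them is a single term: $B\subseteq D_m$ where $m=\max\brac{i_1,\ldots,i_k}$. On the other hand $D_m\subseteq\cup D_i=B$, so $B=D_m\in\cscript$, which is the conclusion.

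The one point requiring care is that the argument must invoke both hypotheses at once: the decreasing intersection $\cap C_i$ is what supplies compactness of $B$, while the increasing union $\cup D_i$ is what supplies the open cover; neither representation alone would let the finite-subcover step go through. Beyond correctly citing the topological facts already proved for Theorem~\ref{thm41}, I expect no genuine obstacle, since once compactness and openness are in hand the finite-subcover collapse is immediate and involves no computation.
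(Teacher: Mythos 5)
Your proof is correct, but it applies compactness in the dual form from the one the paper uses. The paper forms the differences $C_i\smallsetminus D_i$, which lie in $\cscript$ (an algebra), are compact, decrease, and satisfy $\cap\paren{C_i\smallsetminus D_i}=\paren{\cap C_i}\cap\paren{\cap D_i'}=B\cap B'=\emptyset$; the eventually-empty property of decreasing compact sets (the same device used for continuity in \eqref{eq31} and in the proof of Theorem~\ref{thm41}) then forces $C_j\smallsetminus D_j=\emptyset$ for some $j$, whence $B=C_j=D_j$. You instead observe that $B=\cap C_i$ is closed, hence compact, and extract a finite subcover from the open cover $\brac{D_i}$ of $B$, collapsing it to a single $D_m$ by monotonicity; both routes rest on exactly the two facts you cite from Theorem~\ref{thm41} (compactness of $\Omega$, clopenness of cylinder sets). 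A small dividend of your version: the finite-subcover step does not really need either sequence to be monotone, since a finite union of cylinder sets is again a cylinder set, so your argument proves the marginally stronger statement that any $B$ which is simultaneously an arbitrary intersection and an arbitrary union of cylinder sets belongs to $\cscript$; the paper's version, by contrast, exploits monotonicity and in return exhibits an index $j$ at which both sequences have already stabilized at $B$. One nitpick on your closing remark: it is not quite that ``neither representation alone would let the finite-subcover step go through'' as a point requiring special care --- it is simply that the closed representation supplies compactness and the open one supplies the cover, which is the same division of labor the paper achieves by packaging both hypotheses into $C_i\smallsetminus D_i$ at once.
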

\begin{proof}
We have that $D_i\subseteq B\subseteq C_i$, $C_i\smallsetminus D_i\in\cscript$ and
\begin{equation*}
\cap (C_i\smallsetminus D_i)=\cap (C_i\cap D'_i)=\paren{\cap C_i}\cap\paren{\cap D'_i}=B\cap B'=\emptyset
\end{equation*}
Since $C_i\smallsetminus D_i$ is compact in the product topology, there exists a $j\in\positive$ such that
$C_j\smallsetminus D_j=\emptyset$. Therefore, $D_j=C_j$. Since $D_j\subseteq B\subseteq C_j$, we have
$B=D_j=C_j$ so that $B\in\cscript$.
\end{proof}

\begin{cor}       % Corollary 4.3
\label{cor43}
{\rm (a)}\enspace $\lscript\cap\uscript =\cscript$.
{\rm (b)}\enspace If $A,A'\in\lscript$, then $A\in\cscript$.
\end{cor}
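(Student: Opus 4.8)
The plan is to derive both parts directly from Lemma~\ref{lem42}, which characterizes exactly when a set trapped between a decreasing and an increasing sequence of cylinder sets must itself be a cylinder set. Beyond invoking that lemma, the only work is to match the definitions of $\lscript$ and $\uscript$ to its hypotheses and to keep track of complements.

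For part (a), one inclusion is already in hand: the discussion preceding the statement shows $\cscript\subseteq\bscript_{\lscript}\subseteq\lscript$ and also that $\cscript\subseteq\uscript$, so $\cscript\subseteq\lscript\cap\uscript$. For the reverse inclusion I would take $A\in\lscript\cap\uscript$ and write out both representations: membership in $\lscript$ gives $A=\cap A^{(n)}$ with the $A^{(n)}$ forming a decreasing sequence in $\cscript$, and membership in $\uscript$ gives $A=\cup A^{\prime(n)\prime}$ with the $A^{\prime(n)\prime}$ forming an increasing sequence in $\cscript$. Applying Lemma~\ref{lem42} with $C_i=A^{(i)}$ and $D_i=A^{\prime(i)\prime}$ then yields $A\in\cscript$, so $\lscript\cap\uscript\subseteq\cscript$ and equality follows.

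Part (b) reduces to part (a) through a single complement identity. If $A^\prime\in\lscript$, then $A^\prime=\cap A^{\prime(n)}$, and complementing both sides gives $A=\cup A^{\prime(n)\prime}$, which is precisely the defining condition for $A\in\uscript$. Together with the hypothesis $A\in\lscript$ this shows $A\in\lscript\cap\uscript$, so part (a) delivers $A\in\cscript$.

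I do not expect a genuine obstacle, since the compactness argument that supplies the real force has already been carried out inside Lemma~\ref{lem42}. The only point needing a moment's care is confirming that the sequences handed to the lemma are monotone in the required directions---$A^{(n)}$ decreasing and $A^{\prime(n)\prime}$ increasing---but both are immediate from the established fact that $A^{(0)}\supseteq A^{(1)}\supseteq\cdots$ holds for every subset of $\Omega$, applied once to $A$ and once to $A^\prime$.
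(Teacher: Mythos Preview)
Your proposal is correct and follows essentially the same route as the paper: both parts are reduced to Lemma~\ref{lem42} by exhibiting $A$ as simultaneously a decreasing intersection $\cap A^{(n)}$ and an increasing union $\cup A^{\prime(n)\prime}$ of cylinder sets. The only cosmetic difference is that you route part~(b) through part~(a) by first observing $A'\in\lscript\Rightarrow A\in\uscript$, whereas the paper applies Lemma~\ref{lem42} directly to the two sequences; the underlying argument is identical.
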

\begin{proof}
(a)\enspace This follows directly from Lemma~\ref{lem42}.
(b)\enspace Since $A\in\lscript$, we have that $A=\cap A^{(n)}$ where $A^{(n)}\in\cscript$ is decreasing. Since
$A'\in\lscript$ we have that $A'=\cap A^{\prime (n)}$. Hence, $A=\cup A^{\prime (n)\prime}$ where
$A^{\prime (n)\prime}\in\cscript$ is increasing. By Lemma~\ref{lem42}, $A\in\cscript$.
\end{proof}

\begin{thm}       % Theorem 4.4
\label{thm44}
{\rm (a)}\enspace If $A\in\lscript$, then $A'\in\uscript$.
{\rm (b)}\enspace $\uscript$ is a quadratic algebra.
\end{thm}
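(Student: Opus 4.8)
For part (a), the plan is a one-line computation with De Morgan's laws. The only input is the involution $(A')'=A$, which forces $(A')^{\prime(n)\prime}=A^{(n)\prime}$. So if $A\in\lscript$, meaning $A=\cap A^{(n)}$, then
\[
\cup (A')^{\prime(n)\prime}=\cup A^{(n)\prime}=\paren{\cap A^{(n)}}'=A',
\]
which is exactly the defining relation $A'=\cup (A')^{\prime(n)\prime}$ for $A'\in\uscript$. This is the same manipulation already performed for finite $A$ in the paragraph preceding the theorem; here it is recorded for an arbitrary lower set. Read in reverse, the identical chain shows $A\in\uscript$ if and only if $A'\in\lscript$, so complementation is an inclusion-reversing bijection between $\lscript$ and $\uscript$.

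For part (b), I would first dispose of the endpoints and then prove something slightly stronger than the quadratic-algebra axiom. Since $\Omega^{(n)}=\Omega$ and $\emptyset^{(n)}=\emptyset$, both $\Omega$ and $\emptyset$ lie in $\lscript$, so by part~(a) their complements $\emptyset$ and $\Omega$ lie in $\uscript$. For the closure condition, the plan is to show outright that $\uscript$ is closed under finite unions; the quadratic-algebra axiom is then automatic, and in particular the disjointness and the three pairwise-union hypotheses are never needed. Write $\widetilde{S}_n=S^{\prime(n)\prime}$ for the $n$-th inner cylindrical approximation of $S$. Two elementary observations drive the argument: first, from $S\subseteq S^{(n)}$ one gets $\widetilde{S}_n\subseteq S$, the $\widetilde{S}_n$ increase with $n$, and $S\in\uscript$ means precisely $\cup_n\widetilde{S}_n=S$; second, $S\mapsto\widetilde{S}_n$ is monotone, since complementation reverses inclusions, the operation $(\cdot)^{(n)}$ preserves them, and complementation reverses them once more.

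Granting these, let $A,B,C\in\uscript$ and set $W=A\cup B\cup C$. Monotonicity gives $\widetilde{A}_n,\widetilde{B}_n,\widetilde{C}_n\subseteq\widetilde{W}_n$, hence
\[
\widetilde{A}_n\cup\widetilde{B}_n\cup\widetilde{C}_n\subseteq\widetilde{W}_n\subseteq W.
\]
Taking the union over $n$ and using $\cup_n\widetilde{A}_n=A$, $\cup_n\widetilde{B}_n=B$, $\cup_n\widetilde{C}_n=C$, the left-hand side equals $W$, so $W\subseteq\cup_n\widetilde{W}_n\subseteq W$; thus $\cup_n\widetilde{W}_n=W$ and $W\in\uscript$.

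The one step deserving care is the first observation above, namely that membership in $\uscript$ is detected by exactly these approximations $\widetilde{S}_n$ and that they increase to $S$; but this is immediate from the definition of an upper set together with $S\subseteq S^{(n)}$, and requires no compactness. (Compactness entered only in Theorem~\ref{thm41} and Lemma~\ref{lem42}, where intersections had to be controlled.) Conceptually the computation reflects that $\lscript$ and $\uscript$ are the closed and open subsets of the compact metrizable space $\Omega$, so $\uscript$ is in fact closed under arbitrary unions; the substance of the theorem is therefore not the closure itself but the fact that $\uscript$ satisfies the quadratic-algebra axioms despite failing to be an algebra, since it is closed under neither complementation nor intersection.
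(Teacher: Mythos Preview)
Your proof is correct and follows essentially the same route as the paper's: for (a) both arguments are the one-line De~Morgan computation, and for (b) the paper obtains the key inclusion $\widetilde{A}_n\cup\widetilde{B}_n\cup\widetilde{C}_n\subseteq\widetilde{W}_n$ directly via $(A'\cap B'\cap C')^{(n)}\subseteq (A')^{(n)}\cap (B')^{(n)}\cap (C')^{(n)}$, which is exactly your monotonicity of $S\mapsto S^{\prime(n)\prime}$ unwound, and likewise never uses the disjointness or pairwise-union hypotheses (though it does not remark on this). Your identification of $\lscript$ and $\uscript$ with the closed and open subsets of $\Omega$, and hence the closure of $\uscript$ under arbitrary unions, is a correct and illuminating addition that the paper does not make explicit.
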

\begin{proof}
(a)\enspace If $A\in\lscript$, then $A=\cap A^{(n)}$ so that $A'=\cup (A')^{\prime (n)\prime}$. Hence,
$A'\in\uscript$.
(b)\enspace Clearly, $\emptyset,\Omega\in\uscript$. Suppose $A,B,C\in\uscript$ are mutually disjoint. Since
\begin{equation*}
(A\cup B\cup C)^{\prime (n)}=(A'\cap B'\cap C')^{(n)}\subseteq (A')^{(n)}\cap (B')^{(n)}\cap (C')^{(n)}
\end{equation*}
we have that
\begin{equation*}
A^{\prime (n)\prime}\cup B^{\prime (n)'}\cup C^{\prime (n)\prime}\subseteq (A\cup B\cup C)^{\prime (n)\prime}
\end{equation*}
Hence,
\begin{align*}
A\cup B\cup C&=\paren{\cup A^{\prime (n)\prime}}\cup\paren{\cup B^{\prime (n)\prime}}
  \cup\paren{\cup C^{\prime (n)\prime}}\\
  &=\cup\paren{A^{\prime (n)\prime}\cup B^{\prime (n)\prime}\cup C^{\prime (n)\prime}}
  \subseteq\cup (A\cup B\cup C)^{\prime (n)\prime}
\end{align*}
But $(A\cup B\cup C)^{\prime (n)\prime}\subseteq A\cup B\cup C$ so that
\begin{equation*}
A\cup B\cup C=\cup (A\cup B\cup C)^{\prime (n)\prime}
\end{equation*}
Therefore, $A\cup B\cup C\in\uscript$ so $\uscript$ is a $q$-algebra.
\end{proof}

Letting
\begin{equation*}
\bscript _{\uscript}=\brac{A\in\uscript\colon\lim\mu _n(A^{\prime (n)\prime})\text{ exists}}
\end{equation*}
we see that $\bscript _{\uscript}$ is the ``upper'' counterpart of $\bscript _{\lscript}$. As before, if
$A\in\bscript _{\uscript}$ we define $\muhat (A)=\lim\mu _n(A^{\prime (n)\prime})$.
Unfortunately, we have not been able to show that $\bscript _{\uscript}$ is a quadratic algebra. However, we shall show that $\brac{\gamma}'\in\bscript _{\uscript}$ for $\gamma\in\Omega$. We first need the following lemma.

\begin{lem}       % Lemma 4.5
\label{lem45}
For $n\in\positive$, $j=0,1,\ldots ,2^n-1$, the function $c_n(j)$ satisfies
\begin{equation*}
c_{n+1}(2^{n+1}-1-j)=c_n(j)+1
\end{equation*}
\end{lem}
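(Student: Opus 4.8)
The plan is to give a direct combinatorial proof based on the binary/bit-string interpretation of $c_n$, rather than an induction on $n$. Recall that if $j\in\brac{0,1,\ldots,2^n-1}$ then the $n$-path $\omega_j$ is the $(n+1)$-bit string $0\,a_1\cdots a_n$, where $a_1\cdots a_n$ is the ordinary $n$-bit binary expansion of $j$, and $c_n(j)$ counts the indices at which consecutive bits of $0\,a_1\cdots a_n$ differ. First I would identify the arithmetic operation $j\mapsto 2^{n+1}-1-j$ with bitwise complementation: since $0\le j\le 2^n-1$, the number $j$ has $(n+1)$-bit expansion $0\,a_1\cdots a_n$, and subtracting it from $2^{n+1}-1=\underbrace{1\cdots1}_{n+1}$ flips every one of these $n+1$ bits, yielding $1\,\overline{a_1}\cdots\overline{a_n}$.

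Next I would read off the $(n+1)$-path attached to $k=2^{n+1}-1-j$. By definition $\omega_k$ is the $(n+2)$-bit string $0\,c_1\cdots c_{n+1}$ whose trailing $n+1$ bits $c_1\cdots c_{n+1}$ form the $(n+1)$-bit binary expansion of $k$; by the previous step this expansion is exactly $1\,\overline{a_1}\cdots\overline{a_n}$. Hence $\omega_k=0\,1\,\overline{a_1}\cdots\overline{a_n}$, i.e. $\omega_k$ is obtained by prepending a single $0$ to the bitwise complement of $\omega_j$.

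The counting is then immediate. Bitwise complementation preserves the number of places where adjacent bits differ, because $\overline{x}\ne\overline{y}$ iff $x\ne y$; so the complemented block $1\,\overline{a_1}\cdots\overline{a_n}$ has the same number of internal changes as $\omega_j=0\,a_1\cdots a_n$, namely $c_n(j)$. Prepending the leading $0$ contributes exactly one further change, at the $0\to1$ junction between $c_0=0$ and $c_1=1$, and affects no other adjacency. Therefore $c_{n+1}(k)=c_n(j)+1$, as claimed.

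I expect the only delicate point to be the bookkeeping of string lengths and bit indices: an $m$-path carries $m+1$ bits with a forced leading $0$, so one must be careful that $k$ lands in $\brac{2^n,\ldots,2^{n+1}-1}$ inside $\Omega_{n+1}$ and that the complementation is taken in exactly $n+1$ bits (not $n$ or $n+2$). Once the identification ``$2^{n+1}-1-j$ equals the complement of $j$ in $n+1$ bits'' is pinned down, no further computation is needed; as a sanity check one can verify, for instance, $c_4(10)=4=c_3(5)+1$ against the vector $c_3=(0,1,2,1,2,3,2,1)$ of Example~3.
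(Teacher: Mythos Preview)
Your proof is correct and is essentially the same argument as the paper's: both identify $2^{n+1}-1-j$ with the bitwise complement of the $(n+1)$-bit string $\omega_j$, observe that complementation preserves the number of adjacent-bit changes, and note that prepending the mandatory leading $0$ to the complemented string (whose first bit is now $1$) contributes exactly one additional change. Your write-up is slightly more explicit about the bit-length bookkeeping, but the content is identical.
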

\begin{proof}
Let $j\in\Omega _n=\brac{0,1,\ldots ,2^n-1}$ and for $a\in\brac{0,1}$, let $a'=a+1\pmod{2}$. If $j$ has binary representation $j=a_0a_1\cdots a_n$, $a_0=0$, $a_k\in\brac{0,1}$, $k=1,\ldots ,n$, since
\begin{equation*}
a_0a_1\cdots a_n+a'_0a'_1\cdots a'_n=2^{n+1}-1
\end{equation*}
we have that
\begin{equation*}
(2^{n+1}-1)-j=0a'_0a'_1\cdots a'_n\in\Omega _{n+1}
\end{equation*}
Suppose that $c_n(j)=k$ so $a_0a_1\cdots a_n$ has $k$ position switches. These position switches are in
one-to-one correspondence with the position switches in $a'_0a'_1\cdots a'_n$. Since $a'_0=1$,
$0a'_0a'_1\cdots a'_n$ has one more position switch so
\begin{equation*}
c_{n+1}(2^{n+1}-j-1)=k+1\qedhere
\end{equation*}
\end{proof}

\begin{exam}{14}                    % Example 14
Since $c_1=(0,1)$, it follows immediately from Lemma~\ref{lem45} that $c_2=(0,1,2,1)$, $c_3=(0,1,2,1,2,3,2,1)$ and 
\begin{equation*}
c_4=(0,1,2,1,2,3,2,1,2,3,4,3,2,3,2,1)
\end{equation*}
\end{exam}

To show that $\brac{\gamma}'\in\uscript$, for simplicity let $\gamma =000\cdots$ and let $B=\brac{\gamma}'$.

\begin{thm}       % Theorem 4.6
\label{thm46}
The set $B\in\bscript _{\uscript}$ and $\muhat (B)=1$.
\end{thm}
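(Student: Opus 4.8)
The plan is to reduce the statement to a single limit computation and then to kill an interference term by Cauchy--Schwarz. First I would unwind the notation. Since $B=\brac{\gamma}'$ we have $B'=\brac{\gamma}$, and because $\gamma=000\cdots$, the condition $\omega(n)\brac{\gamma}$ says precisely that the first $n+1$ bits of $\omega$ vanish; hence $B^{\prime(n)}=\rmcyl(0\cdots 0)$ is the elementary cylinder set cut out by $n+1$ leading zeros, which corresponds to the singleton $\brac{0}\in\ascript_n$. As $n$ grows the sets $B^{\prime(n)}$ decrease to $\brac{\gamma}$, so the complements $B^{\prime(n)\prime}$ increase, and $\cup_n B^{\prime(n)\prime}$ is the set of all paths possessing at least one $1$, i.e.\ $\brac{\gamma}'=B$. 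This gives $B=\cup B^{\prime(n)\prime}$, so $B\in\uscript$, and it remains only to evaluate $\lim_n\mu_n(B^{\prime(n)\prime})$.

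Next I would identify $B^{\prime(n)\prime}$, as a cylinder set, with $\Omega_n\smallsetminus\brac{0}\in\ascript_n$, so that $\mu_n(B^{\prime(n)\prime})=\mu_n(\Omega_n\smallsetminus\brac{0})$. The naive guess $\mu_n(\Omega_n)-\mu_n(\brac{0})=1-2^{-n}$ is \emph{wrong}, because $\mu_n$ is not additive; the correct expression carries an interference term. Using the spanning vector-valued measure $\escript\colon\ascript_n\to\complex^2$ of Theorem~\ref{thm23}, additivity of $\escript$ gives $\escript(\Omega_n\smallsetminus\brac{0})=\escript(\Omega_n)-\escript(\brac{0})$, whence
\begin{equation*}
\mu_n(\Omega_n\smallsetminus\brac{0})=\doubleab{\escript(\Omega_n)-\escript(\brac{0})}^2=\mu_n(\Omega_n)-2\rmre D_n(\Omega_n,\brac{0})+\mu_n(\brac{0})=1+2^{-n}-2\rmre D_n(\Omega_n,\brac{0}),
\end{equation*}
using $\mu_n(\Omega_n)=1$ (Corollary~\ref{cor24}(a)) and $\mu_n(\brac{0})=2^{-n}$.

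The crux is then to show that the interference term $2\rmre D_n(\Omega_n,\brac{0})$ vanishes as $n\to\infty$, and this is where I expect the only real work. It follows at once from Cauchy--Schwarz in $\complex^2$:
\begin{equation*}
\ab{D_n(\Omega_n,\brac{0})}=\ab{\elbows{\escript(\Omega_n),\escript(\brac{0})}}\le\doubleab{\escript(\Omega_n)}\,\doubleab{\escript(\brac{0})}=\sqrt{\mu_n(\Omega_n)}\,\sqrt{\mu_n(\brac{0})}=2^{-n/2}.
\end{equation*}
Hence $\ab{2\rmre D_n(\Omega_n,\brac{0})}\le 2^{1-n/2}\to 0$, and together with $2^{-n}\to 0$ this gives $\lim_n\mu_n(\Omega_n\smallsetminus\brac{0})=1$. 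Therefore the defining limit for $\bscript_{\uscript}$ exists, so $B\in\bscript_{\uscript}$, and $\muhat(B)=1$. The main obstacle is conceptual rather than computational: because $\mu_n$ fails to be additive one cannot simply drop the cross term, and the content of the theorem is exactly that this interference term is asymptotically negligible. Cauchy--Schwarz makes this transparent without needing the explicit value of $D_n(\Omega_n,\brac{0})$, which, if one wishes, can be shown via Lemma~\ref{lem45} to equal $2^{-n/2}\cos(n\pi/4)$.
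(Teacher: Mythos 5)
Your proof is correct, but it takes a genuinely different route from the paper's. Both arguments begin identically: identify $B^{\prime(n)\prime}$ with the event $\Omega_n\smallsetminus\brac{0}\in\ascript_n$, observe $B=\cup B^{\prime(n)\prime}$ so $B\in\uscript$, and reduce everything to $\lim_n\mu_n\paren{\Omega_n\smallsetminus\brac{0}}=1$. At that point the paper computes the quantity \emph{exactly}: it writes $\mu_n(B_n)=1+2^{-n}-\tfrac{1}{2^{n-1}}\sum\brac{i^{c_n(j)}\colon j\text{ even}}$, counts the values of $c_n$ modulo $4$ via the recurrence of Lemma~\ref{lem45}, proves by induction the closed form $v_n(j)=2^{n-2}+2^{\frac{n}{2}-1}\cos (n-2j)\pi/4$, and concludes $\mu_n(B_n)=1+2^{-n}-2^{1-n/2}\cos n\pi/4\to 1$. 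You instead expand $\mu_n\paren{\Omega_n\smallsetminus\brac{0}}=1+2^{-n}-2\rmre D_n\paren{\Omega_n,\brac{0}}$ using additivity of the vector-valued measure $\escript$, and kill the interference term by Cauchy--Schwarz, $\ab{D_n\paren{\Omega_n,\brac{0}}}\le 2^{-n/2}$; this uses only strong positivity (Theorem~\ref{thm23}, or equivalently Corollary~\ref{cor24}(c) applied to the $2\times 2$ matrix) together with $\mu_n(\Omega_n)=1$ and $\mu_n\paren{\brac{0}}=2^{-n}$, and never touches the explicit structure of $D^n$. Your argument is shorter and strictly more general: it applies verbatim to $\brac{\gamma}'$ for arbitrary $\gamma$ (e.g.\ the return event $C$ that the paper only says is handled ``analogously'') and indeed to any strongly positive system whose singleton measures tend to zero, in keeping with the paper's stated interest in non-unitary processes. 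What the paper's computation buys is the exact value of $\mu_n(B_n)$, including the oscillatory correction $2^{1-n/2}\cos n\pi/4$ --- which incidentally shows your Cauchy--Schwarz bound is attained, hence sharp, along $n\equiv 0\pmod{4}$ --- together with the formula for $v_n(j)$, of some independent combinatorial interest. Your closing remark that $D_n\paren{\Omega_n,\brac{0}}=2^{-n/2}\cos n\pi/4$ does match the paper's result (the sum is real since $c_n(j)$ is even whenever $j$ is even), so the two computations are fully consistent.
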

\begin{proof}
For ease of notation, let $B_n=B^{\prime (n)\prime}=\brac{\gamma}^{(n)\prime}$. Then
$B_1=\brac{00}'\times\brac{0,1}\times\cdots$, $B_2=\brac{000}\times\brac{0,1}\times\cdots$, $\cdots$ and
$B=\cup B_n\in\uscript$. We must show that $\lim\mu _n(B_n)=1$. From the definition of $B_n$ we have that
\begin{align}         % equation (4.1)
\label{eq41}
\mu _n(B_n)&=\sum\brac{D_{\gamma ,\gamma '}^n\colon\gamma ,\gamma '\ne 0}
  =1-\sum\brac{D_{\gamma ,\gamma '}^n\colon\gamma\text{ or }\gamma '=0}\notag\\
  &=1-\frac{1}{2^n}-\frac{1}{2^{n-1}}\,\sum _{j=2}^{2^n-2}\brac{i^{c_n(j)}\colon j\text{ even}}\notag\\
  &=1-\frac{1}{2^n}-\frac{1}{2^{n-1}}\,\sum _{j=1}^{2^{n-1}-1}i^{c_n(2j)}
  =1+\frac{1}{2^n}-\frac{1}{2^{n-1}}\,\sum _{j=0}^{2^{n-1}-1}i^{c_n(2j)}
\end{align}
Let $u_n(j)$ be the number of $j$-values of $c_n$. For example $u_3(0)=1$, $u_3(1)=3$, $u_3(2)=3$, $u_3(3)=1$. It follows from Lemma~\ref{lem45} that

\begin{equation}         % equation (4.2)
\label{eq42}
u_{n+1}(j)=u_n(j)+u_n(j-1),\quad j=1,2,\ldots ,n+1
\end{equation}
Letting
\begin{equation*}
v_n(j)=\sum\brac{u_n(k)\colon k=j\pmod{4}}
\end{equation*}
for $j=0,1,2,3$ we have that
\begin{equation*}
v_n(j)=u_n(j)+u_n(j+4)+u_n(j+8)+\cdots +u_n\paren{4\bigg\lfloor\frac{n-j}{4}\bigg\rfloor+j}
\end{equation*}
where $\lfloor{x}\rfloor$ is the largest integer less than or equal to $x$. Applying \eqref{eq42} we conclude that $v_n$ satisfies the recurrence relations
\begin{equation}         % equation (4.3)
\label{eq43}
v_{n+1}(j)=v_n(j)+v_n(j-1)
\end{equation}
for $j=1,2,3,4$ where $v_n(-1)=v_n(3)$. Also, $v_n$ satisfies the initial conditions $v_1(0)=v_1(1)=1$,
$v_1(2)=v_1(3)=0$.

We now prove by mathematical induction on $n$ that
\begin{equation}         % equation (4.4)
\label{eq44}
v_n(j)=2^{n-2}+2^{\frac{n}{2}-1}\cos (n-2j)\pi /4
\end{equation}
By the initial conditions, \eqref{eq44} holds for $n=1$, $j=0,1,2,3$. Suppose \eqref{eq44} holds for $n$ and $j=0,1,2,3$. We then have by \eqref{eq43} that 
\begin{align*}
v_{n+1}(j)&=v_n(j)+v_n(j-1)\\
  &=2^{n-1}+2^{\frac{n}{2}-1}\sqbrac{\cos (n-2j)\pi /4+\cos\paren{n-2(j-1)}\pi /4}\\
  &=2^{n-1}+2^{\frac{n}{2}-1}\sqbrac{\cos (n-2j)\pi /4-\sin (n-2j)\pi /4}\\
  &=2^{n-1}+2^{\frac{n}{2}-1}2^{1/2}\cos\sqbrac{(n-2j)\pi /4+\pi /4}\\
  &=2^{(n+1)-2}-2^{\frac{n+1}{2}-1}\cos\sqbrac{\paren{(n+1)-2j}\pi /4}\\
\end{align*}
This proves \eqref{eq44} by induction.

Applying \eqref{eq41} we have that
\begin{equation}         % equation (4.5)
\label{eq45}
\mu _n(B_n)=1+\tfrac{1}{2^n}-\tfrac{1}{2^{n-1}}\sqbrac{v_n(0)-v_n(2)}
\end{equation}
By \eqref{eq44} we have
\begin{align*}
v_n(0)&=2^{n-1}+2^{\frac{n}{2}-1}\cos n\pi /4\\
\intertext{and}
v_n(2)&=2^{n-2}+2^{\frac{n}{2}-1}\cos (n-4)\pi /4=2^{n-2}-2^{\frac{n}{2}-1}\cos n\pi /4
\end{align*}
Hence, \eqref{eq44} becomes
\begin{equation*}
\mu _n(B_n)=1+\frac{1}{2^n}-\frac{2^{n/2}}{2^{n-1}}\,\cos n\pi /4
  =1+\frac{1}{2^n}-\frac{1}{2^{n/2-1}}\,\cos n\pi /4
\end{equation*}
We conclude that $\lim\mu _n(B_n)=1$.
\end{proof}

Notice that $C=\brac{0111\cdots}'$ is the event that the particle ever returns to the site $0$. Analogous to
Theorem~\ref{thm46} we have that $C\in\bscript _{\uscript}$ and $\mu (C)=1$. Thus, $C$ is a physically significant event in $\bscript _{\uscript}\smallsetminus\cscript$ and the $q$-probability of return in unity.

We say that $A,B\subseteq\Omega$ are \textit{strongly disjoint} if there is an $n\in\positive$ such that
$A^{(n)}\cap B^{(n)}=\emptyset$. It is clear that we then have that $A^{(m)}\cap B^{(m)}=\emptyset$ for $m\ge n$. Since $A\subseteq A^{(n)}$, $B\subseteq B^{(n)}$, if $A$ and $B$ are strongly disjoint, then $A\cap B=\emptyset$. However, the converse does not hold.

\begin{exam}{15}                    % Example 15
Define $A\subseteq\Omega$ by
\begin{equation*}
A=\brac{\omega\in\Omega\colon\omega\text{ has finitely many 1s}}
\end{equation*}
Then $A\cap A'=\emptyset$ but $A^{(n)}=A^{\prime (n)}=\Omega$ for all $n\in\positive$. Hence,
$A^{(n)}\cap A^{\prime (n)}\ne\emptyset$ so $A,A'$ are disjoint but not strongly disjoint.
\end{exam}

A collection of subsets $Q\subseteq 2^\Omega$ is a \textit{weak quadratic algebra} if $\emptyset ,\Omega\in Q$ and if $A,B,C\in Q$ are strongly disjoint and $A\cup B,A\cup C,B\cup C\in Q$, then $A\cup B\cup C\in Q$. If $Q$ is a weak quadratic algebra, a $q$-\textit{measure} on $Q$ is a map $\nu\colon Q\to\real ^+$ such that if $A,B,C\in Q$ are strongly disjoint and $A\cup B,A\cup C,B\cup C\in Q$, then
\begin{equation*}
\nu (A\cup B\cup C)=\nu (A\cup B)+\nu (A\cup C)+\nu (B\cup C)-\nu (A)-\nu (B)-\nu (C)
\end{equation*}
The proof of the next theorem is similar to that of Theorem~\ref{thm41}.

\begin{thm}       % Theorem 4.7
\label{thm47}
$\bscript$ and $\bscript _{\uscript}$ are weak quadratic algebras and $\muhat$ is a $q$-measure on $\bscript$ that extends $\mu$ to $\bscript$.
\end{thm}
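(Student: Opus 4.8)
The plan is to follow the template of Theorem~\ref{thm41}, with the compactness step there (which upgraded disjointness to eventual disjointness of the approximants) replaced by the hypothesis of strong disjointness, which supplies the same eventual disjointness directly and is exactly why the conclusion must be stated for \emph{weak} quadratic algebras. Throughout I would write $A^{\uparrow (n)}=A^{\prime (n)\prime}$ for the inner cylindrical approximation of an upper set, noting that $\omega\in A^{\uparrow (n)}$ iff the elementary time-$n$ cylinder through $\omega$ lies entirely inside $A$, so that $A^{\uparrow (n)}$ is the union of all such cylinders contained in $A$. I would also record the elementary identity $(A\cup B\cup C)^{(n)}=A^{(n)}\cup B^{(n)}\cup C^{(n)}$, valid for \emph{arbitrary} subsets, together with its pairwise versions.

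First the case of $\bscript$. Clearly $\emptyset,\Omega\in\bscript$. Suppose $A,B,C\in\bscript$ are pairwise strongly disjoint with $A\cup B,A\cup C,B\cup C\in\bscript$. Strong disjointness yields an $n_0$ with $A^{(m)},B^{(m)},C^{(m)}$ mutually disjoint for $m\ge n_0$; for such $m$ these are disjoint elements of $\cscript$, so grade-2 additivity of $\mu$ on $\cscript$ (Condition (a) of Theorem~\ref{thm25}) and the identity above give
\begin{equation*}
\mu\sqbrac{(A\cup B\cup C)^{(m)}}=\mu\sqbrac{(A\cup B)^{(m)}}+\mu\sqbrac{(A\cup C)^{(m)}}+\mu\sqbrac{(B\cup C)^{(m)}}-\mu(A^{(m)})-\mu(B^{(m)})-\mu(C^{(m)}).
\end{equation*}
Letting $m\to\infty$, every term on the right converges by hypothesis, so the left side converges; hence $A\cup B\cup C\in\bscript$ and $\muhat$ is grade-2 additive. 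Since $\cscript\subseteq\bscript$ with $\muhat|_{\cscript}=\mu$, this exhibits $\bscript$ as a weak quadratic algebra on which $\muhat$ is a $q$-measure extending $\mu$.

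For $\bscript_{\uscript}$ the same scheme applies, and here the real work lies. Again $\emptyset,\Omega\in\bscript_{\uscript}$, and for strongly disjoint $A,B,C\in\bscript_{\uscript}$ the membership $A\cup B\cup C\in\uscript$ follows from the argument of Theorem~\ref{thm44}(b) (which uses only disjointness). The delicate point is that, unlike the outer approximation, the inner one satisfies only the inclusion $A^{\uparrow (n)}\cup B^{\uparrow (n)}\cup C^{\uparrow (n)}\subseteq(A\cup B\cup C)^{\uparrow (n)}$ in general; since a $q$-measure is \emph{not} monotone, inclusion is useless and I need exact equality. The key lemma I would prove is: if $A,B,C$ are pairwise strongly disjoint with threshold $n_0$, then for $n\ge n_0$
\begin{equation*}
(A\cup B\cup C)^{\uparrow (n)}=A^{\uparrow (n)}\cup B^{\uparrow (n)}\cup C^{\uparrow (n)},
\end{equation*}
the three pieces being mutually disjoint since $A^{\uparrow (n)}\subseteq A^{(n)}$, etc. To see this, take an elementary time-$n$ cylinder $E\subseteq A\cup B\cup C$ with $n\ge n_0$; if $E$ met two of the sets, say $A$ and $B$, then the enclosing elementary time-$n_0$ cylinder would lie in both $A^{(n_0)}$ and $B^{(n_0)}$, contradicting $A^{(n_0)}\cap B^{(n_0)}=\emptyset$. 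Thus $E$ meets exactly one of $A,B,C$, and being inside their union it lies wholly in that one, which is the reverse inclusion; the same argument applied to pairs gives $A^{\uparrow (n)}\cup B^{\uparrow (n)}=(A\cup B)^{\uparrow (n)}$ and its analogues.

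With this lemma the computation is word-for-word that of the $\bscript$ case: for $n\ge n_0$ the inner approximants are disjoint cylinder sets whose union is $(A\cup B\cup C)^{\uparrow (n)}$, so grade-2 additivity of $\mu$ on $\cscript$ expresses $\mu\sqbrac{(A\cup B\cup C)^{\uparrow (n)}}$ through $\mu\sqbrac{(A\cup B)^{\uparrow (n)}}$, $\mu(A^{\uparrow (n)})$ and the remaining terms; letting $n\to\infty$ shows $\lim\mu\sqbrac{(A\cup B\cup C)^{\uparrow (n)}}$ exists and that $\muhat$ is grade-2 additive, so $\bscript_{\uscript}$ is a weak quadratic algebra. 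I expect the splitting lemma of the previous paragraph to be the main obstacle: it is the only step that genuinely requires strong rather than ordinary disjointness, and it is precisely what forces the ``weak'' qualifier in the statement.
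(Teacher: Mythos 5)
Your proposal is correct and takes exactly the route the paper intends: the paper prints no separate argument for this theorem, saying only that the proof is ``similar to that of Theorem~\ref{thm41},'' with the strong-disjointness hypothesis supplying the eventual disjointness of the approximants that compactness supplied there, and your treatment of $\bscript$ is word-for-word that adaptation. Your splitting lemma $(A\cup B\cup C)^{\prime (n)\prime}=A^{\prime (n)\prime}\cup B^{\prime (n)\prime}\cup C^{\prime (n)\prime}$ for $n$ past the strong-disjointness threshold is precisely the nontrivial detail that the paper's one-line remark glosses over in the $\bscript _{\uscript}$ case (where only an inclusion holds in general, and monotonicity is unavailable for $q$-measures), and your proof of it via the enclosing time-$n_0$ elementary cylinder is sound.
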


Let $A$ be the set in Example~15. We have that $A,A'\in\bscript$ and $\muhat (A)=\muhat (A')=1$. Since
$A,A'\notin\lscript$, we see that $\bscript$ properly contains $\bscript _{\lscript}$. We also have that the set $\bscript$ of Theorem~\ref{thm46} is in $\bscript$ but not in $\lscript$.

\section{``Expectations''}  % Section 5
%% RDS: 
This section explores the mathematical analogy between functions on
$\Omega$ and random variables in classical probability theory, using a
%% RDS: Stan, please check citation here:
notion of ``expectation'' introduced in \cite{gud09, gud104}.  
When applied to the characteristic function $\chi_A$ of an event
$A\in\ascript$, this
expectation reproduces the quantum measure $\mu(A)$ of $A$, which
classically would be the probability that the event $A$ occurs.  One
knows that such an interpretation is not viable quantum mechanically when
interference is present; and one must seek elsewhere for the physical
meaning of $\mu$ \cite{capetown}.
We hope that the formal relationships we expose here can be helpful in this
quest, or in making further contact with the more traditional quantum
formalism. 

The following paragraphs consider expectations 
%% The following paragraphs consider quantum expectations of random variables 
in terms
of a $q$-integral \cite{gud09, gud104}. 
For a 
positive
random variable
$f\colon\Omega _n\to\real ^+$ we define 
\begin{align}         % equation (5.1)
\label{eq51}
\int fd\mu _n&=\sum _{i,j=0}^{2^n-1}\min\sqbrac{f(\omega _i),f(\omega _j)}D_n(\omega _i,\omega _j)\notag\\
  &=\sum _{i,j=0}^{2^n-1}\min\sqbrac{f(\omega _i),f(\omega _j)}D_{ij}^n
\end{align}
An arbitrary random variable $f\colon\Omega _n\to\real$ has a unique representation $f=f^+-f^-$ where
$f^+,f^-\ge 0$ and $f^+f^-=0$ and we define
\begin{equation*}
\int fd\mu _n=\int f^+d\mu _n-\int f^-d\mu _n
\end{equation*}
This $q$-integral has the following properties. If $f\ge 0$, then $\int fd\mu _n\ge 0$,
$\int\alpha fd\mu _n=\alpha\int fd\mu _n$ for all $\alpha\in\real$, $\int\chi _Ad\mu _n=\mu _n(A)$ for all
$A\in\ascript _n$ where $\chi _A$ is the characteristic function of $A$. However, in general
\begin{equation*}
\int (f+g)d\mu _n\ne\int fd\mu _n+\int gd\mu _n
\end{equation*}

\begin{thm}       % Theorem 5.1
\label{thm51}
If $a_1,\ldots ,a_n\in\real ^+$, then the matrix $M_{ij}=\sqbrac{\min (a_i,a_j)}$ is positive semi-definite.
\end{thm}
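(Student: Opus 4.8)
The plan is to show that the matrix $M_{ij} = \min(a_i, a_j)$ can be written as a Gram matrix, i.e. as $\langle v_i, v_j\rangle$ for suitable vectors $v_i$, since any Gram matrix is automatically positive semi-definite. The key observation is the integral identity
\begin{equation*}
\min(a_i, a_j) = \int_0^\infty \chi_{[0,a_i]}(t)\,\chi_{[0,a_j]}(t)\,dt,
\end{equation*}
because the product of the two indicator functions equals $1$ precisely on $[0,\min(a_i,a_j)]$ and $0$ elsewhere. Thus if we set $v_i = \chi_{[0,a_i]} \in L^2([0,\infty))$, then $M_{ij} = \langle v_i, v_j\rangle$ is manifestly a Gram matrix.

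From here I would finish by the standard argument, which mirrors the computation in Corollary~\ref{cor24}(c): for any scalars $c_1, \ldots, c_n \in \real$ we have
\begin{equation*}
\sum_{i,j} M_{ij}\, c_i c_j = \sum_{i,j} \langle v_i, v_j\rangle\, c_i c_j = \Big\langle \sum_i c_i v_i,\ \sum_j c_j v_j\Big\rangle = \Big\| \sum_i c_i v_i \Big\|^2 \ge 0,
\end{equation*}
which is exactly the statement that $M$ is positive semi-definite. Since the $a_i$ are nonnegative, the intervals $[0,a_i]$ are well-defined and the integral converges (each $v_i$ is in $L^2$ because $a_i < \infty$), so there are no convergence issues to worry about.

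A purely finite-dimensional variant of the same idea avoids measure theory entirely and may read more cleanly. Relabel so that $0 \le a_{(1)} \le a_{(2)} \le \cdots \le a_{(n)}$; then $\min(a_i, a_j)$ telescopes as a sum of the ``gaps'' $a_{(k)} - a_{(k-1)}$ (with $a_{(0)} = 0$) over those indices $k$ for which both $a_i$ and $a_j$ are at least $a_{(k)}$. This expresses $M$ as a nonnegative combination $\sum_k (a_{(k)} - a_{(k-1)})\, w_k w_k^{\mathsf{T}}$ of rank-one projections $w_k w_k^{\mathsf{T}}$, where $w_k$ is the $0/1$ indicator vector of the set $\{i : a_i \ge a_{(k)}\}$; a nonnegative combination of positive semi-definite rank-one matrices is positive semi-definite.

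I do not expect a genuine obstacle here: the result is a classical fact and the Gram-matrix representation is the cleanest route. The only point requiring a little care is making the $\min$-as-overlap identity precise (the indicator-function computation), but this is routine. The discrete decomposition in the previous paragraph is the fully elementary alternative should one wish to keep the argument self-contained within linear algebra.
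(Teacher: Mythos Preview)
Your proof is correct; both the Gram-matrix argument via $L^2$ indicator functions and the discrete rank-one decomposition are valid and standard. The paper, however, takes a different route: it orders the $a_i$ and proves by induction on $n$ the explicit determinant formula $\det M = a_1(a_2-a_1)(a_3-a_2)\cdots(a_n-a_{n-1})$, then observes that every principal submatrix of $M$ is again a $\min$-matrix of a subset of the $a_i$ and hence has nonnegative determinant by the same formula, so positive semi-definiteness follows from the principal-minors criterion. Your discrete decomposition uses the very same gaps $a_{(k)}-a_{(k-1)}$ that appear as factors in the paper's determinant, but packages them as coefficients in a sum $\sum_k (a_{(k)}-a_{(k-1)})\,w_k w_k^{\mathsf{T}}$ rather than as a product. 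Your argument is shorter and more conceptual (and the $L^2$ version generalizes immediately to kernels $\min(f,g)$ on any measure space); the paper's argument yields the determinant formula as a byproduct, which is extra information but not needed for the theorem itself.
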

\begin{proof}
We can assume without loss of generality that $a_1\le a_2\le\cdots\le a_n$. We then write
\begin{equation}         % equation (5.2)
\label{eq52}
M=\left[\begin{matrix}\noalign{\smallskip}a_1&a_1&a_1&\cdots&a_1\\\noalign{\smallskip}
  a_1&a_2&a_2&\cdots&a_2\\\noalign{\smallskip}a_1&a_2&a_3&\cdots&a_3\\\noalign{\smallskip}
  \vdots&&&&\\\noalign{\smallskip}a_1&a_2&a_3&\cdots&a_n\\\noalign{\smallskip}
  \end{matrix}\right]
\end{equation}
Subtracting the first column from the other columns gives the determinant
\begin{equation}         % equation (5.3)
\label{eq53}
\ab{M}=\left[\begin{matrix}\noalign{\smallskip}a_1&0&0&\cdots&0\\\noalign{\smallskip}
  a_1&a_2-a_1&a_2-a_1&\cdots&a_2-a_1\\\noalign{\smallskip}
  a_1&a_2-a_1&a_3-a_1&\cdots&a_3-a_1\\\noalign{\smallskip}
  \vdots&&&&\\\noalign{\smallskip}a_1&a_2-a_1&a_3-a_1&\cdots&a_n-a_1\\\noalign{\smallskip}
  \end{matrix}\right]
\end{equation}
We now prove by induction on $n$ that
\begin{equation}         % equation (5.4)
\label{eq54}
\ab{M}=a_1(a_2-a_1)(a_3-a_2)\cdots (a_n-a_{n-1})
\end{equation}
For $n=1$ we have $M=\sqbrac{a_1}$ and $\ab{M}=a_1$ and for $n=2$ we have
\begin{equation*}
M=\left[\begin{matrix}\noalign{\smallskip}a_1&a_1\\\noalign{\smallskip}
  a_1&a_2\\\noalign{\smallskip}\end{matrix}\right]
\end{equation*}
and $\ab{M}=a_1a_2-a_1^2=a_1(a_2-a_1)$. Suppose the result \eqref{eq54} holds for $n-1$ and let $M$ have the form \eqref{eq52}. Then $\ab{M}$ has the form \eqref{eq53} so we have
\begin{equation*}
\ab{M}=a_1\left[\begin{matrix}\noalign{\smallskip}a_2-a_1&a_2-a_1&a_2-a_1&\cdots&a_2-a_1\\
  \noalign{\smallskip}a_2-a_1&a_3-a_1&a_3-a_1&\cdots&a_3-a_1\\\noalign{\smallskip}
  a_2-a_1&a_3-a_1&a_4-a_1&\cdots&a_4-a_1\\\noalign{\smallskip}
  \vdots&&&&\\\noalign{\smallskip}a_2-a_1&a_3-a_1&a_4-a_1&\cdots&a_n-a_1\\\noalign{\smallskip}
  \end{matrix}\right]
\end{equation*}
It follows from the induction hypothesis that
\begin{align*}
\ab{M}&=a_1(a_2-a_1)\sqbrac{(a_3-a_1)-(a_2-a_1)}\sqbrac{(a_4-a_1)-(a_3-a_1)}\\
  &\qquad\cdots\sqbrac{(a_n-a_1)-(a_{n-1}-a_1)}\\
  &=a_1(a_2-a_1)(a_3-a_2)\cdots (a_n-a_{n-1})
\end{align*}
This completes the induction proof. Since $a_1\le a_2\le\cdots\le a_n$, we conclude that $\ab{M}\ge 0$. Since all the principal submatrices of $M$ have the form \eqref{eq52}, they also have nonnegative determinants. Hence, $M$ is positive semi-definite.
\end{proof}

If $f\colon\Omega _n\to\real ^+$, define the $2^n\times 2^n$ matrix $\fhat$ given by
\begin{equation*}
\fhat _{ij}=\min\sqbrac{f(i),f(j)}
\end{equation*}
It follows from Theorem~\ref{thm51} that $\fhat$ is positive semi-definite. For $f\colon\Omega _n\to\real$ we can write $f=f^+-f^-$ in the canonical way where $f^+,f^-\ge 0$. Define the self-adjoint matrix $\fhat$ by
\begin{equation*}
\fhat _{ij}=f_{ij}^{+\wedge}-f_{ij}^{-\wedge}
\end{equation*}
Applying \eqref{eq51} we have that
\begin{equation}         % equation (5.5)
\label{eq55}
\int fd\mu _n=\sum _{i,j=0}^{2^n-1}\fhat _{ij}D_{ij}^n
\end{equation}
We 
might
think of $\fhat$ as the 
``observable''
representing the 
``random variable''
$f$. The next result shows that
$\int fd\mu _n$ is 
then
given by the usual quantum formula for the expectation of the observable $\fhat$ in the 
``state''
$D^n$.

\begin{thm}       % Theorem 5.2
\label{thm52}
For $f\colon\Omega _n\to\real$ we have that $\int fd\mu _n=\rmtr (\fhat D^n)$.
\end{thm}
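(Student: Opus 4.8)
The plan is to reduce the identity entirely to the symmetry of the decoherence matrix $D^n$, because \eqref{eq55} has already done the substantive work. Indeed, \eqref{eq55} expresses the $q$-integral as $\int fd\mu_n=\sum_{i,j=0}^{2^n-1}\fhat_{ij}D_{ij}^n$, so the theorem amounts to showing that this double sum equals $\rmtr(\fhat D^n)$. Note that \eqref{eq55} is stated for arbitrary real $f$, so there is no need to treat the case $f\ge 0$ separately; the canonical splitting $f=f^+-f^-$ has already been absorbed into the definition of the self-adjoint matrix $\fhat$.

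First I would expand the trace directly from the definition of the matrix product: $\rmtr(\fhat D^n)=\sum_i(\fhat D^n)_{ii}=\sum_{i,j=0}^{2^n-1}\fhat_{ij}D_{ji}^n$, with all indices running over $0,1,\ldots,2^n-1$. Comparing this with \eqref{eq55}, the sole discrepancy is that the trace produces the entry $D_{ji}^n$ where \eqref{eq55} carries $D_{ij}^n$. Hence everything comes down to the interchange $D_{ji}^n=D_{ij}^n$.

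Second, I would close this gap by invoking the symmetry of $D^n$. From \eqref{eq22} one reads off $D^n(\omega',\omega)=\overline{D^n(\omega,\omega')}$, so $D^n$ is Hermitian; and Theorem~\ref{thm21} shows that every entry of $D^n$ is real (each equals $0$, $1/2^n$, or $-1/2^n$). A real Hermitian matrix is symmetric, so $D_{ji}^n=D_{ij}^n$. Substituting this into the trace expansion yields $\rmtr(\fhat D^n)=\sum_{i,j}\fhat_{ij}D_{ij}^n=\int fd\mu_n$, which is the assertion.

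I do not expect a genuine obstacle here: the entire content is the index bookkeeping that matches the two double sums, together with the elementary observation that $D^n$ is real and symmetric. The only place where any care is warranted is the symmetry step, and that is handed to us by Theorem~\ref{thm21}; one could equally check it against the explicit matrices $D^2$ and $D^3$ of Examples~2 and~3, both of which are visibly symmetric.
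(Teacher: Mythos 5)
Your proof is correct, but it closes the index gap with a different symmetry than the paper does. The paper's own proof is the same one-line trace computation run in the other direction: starting from \eqref{eq55} it uses the symmetry of $\fhat$ itself --- $\fhat_{ij}=\fhat_{ji}$, which is automatic from the construction, since $\min\sqbrac{f(i),f(j)}$ is symmetric in $(i,j)$ and this symmetry survives the splitting $\fhat_{ij}=f_{ij}^{+\wedge}-f_{ij}^{-\wedge}$ --- to write $\sum_{i,j}\fhat_{ij}D_{ij}^n=\sum_{i,j}\fhat_{ji}D_{ij}^n=\sum_j(\fhat D^n)_{jj}=\rmtr(\fhat D^n)$. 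You instead invoke $D_{ji}^n=D_{ij}^n$, which does hold here but is a model-specific fact: a general decoherence matrix is only Hermitian, and $D^n$ is symmetric in this paper only because Theorem~\ref{thm21} shows its entries are real (equivalently, paths with $\alpha_n=\alpha'_n$ satisfy $c_n(\omega)\equiv c_n(\omega')\pmod{2}$, so $i^{\sqbrac{c_n(\omega)-c_n(\omega')}}=\pm 1$). So your argument is sound for this two-site walk, and your supporting citations (Hermiticity from \eqref{eq22}, realness from Theorem~\ref{thm21}) are exactly what is needed; but it would not survive the generalization the authors flag in the introduction (non-unitary processes, general finite unitary systems), where the off-diagonal entries of $D$ can be genuinely complex, whereas the paper's route, resting only on the symmetry of $\fhat$, goes through verbatim for any Hermitian $D$. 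If you want your version at that level of generality, replace the appeal to Theorem~\ref{thm21} by the relabeling $i\leftrightarrow j$ combined with the symmetry of $\fhat$ --- which is precisely the paper's step. You are also right that \eqref{eq55} already covers arbitrary real $f$, so no separate treatment of $f\ge 0$ is needed; the paper's proof likewise makes no case split.
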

\begin{proof}
Applying \eqref{eq55}, since $\fhat$ is 
symmetric
%% self-adjoint, 
we have
\begin{equation*}
\int fd\mu _n=\sum _{i,j}\fhat _{ij}D_{ij}^n=\sum _{i,j}\fhat _{ji}D_{ij}^n
  =\sum _j(\fhat D^n)_{jj}=\rmtr (\fhat D^n)\qedhere
\end{equation*}
\end{proof}

\begin{cor}       % Corollary 5.3
\label{cor53}
For any $A\in\ascript _n$ we have that $\mu _n(A)=\rmtr (\chihat _AD^n)$.
\end{cor}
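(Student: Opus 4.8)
The plan is to obtain this as an immediate specialization of Theorem~\ref{thm52}. Since $A\in\ascript_n$, its characteristic function $\chi_A\colon\Omega_n\to\real$ is a random variable (in fact a nonnegative one, so that $\chi_A^+=\chi_A$ and $\chi_A^-=0$), and hence Theorem~\ref{thm52} applies verbatim with $f=\chi_A$ to give $\int\chi_A\,d\mu_n=\rmtr(\chihat_A D^n)$. It then remains only to identify the left-hand side with $\mu_n(A)$.

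That identification is exactly the property $\int\chi_A\,d\mu_n=\mu_n(A)$ recorded just before Theorem~\ref{thm51}, so in principle one may simply quote it. If a self-contained check is preferred, I would note that $\chi_A$ takes values in $\brac{0,1}$, whence $(\chihat_A)_{ij}=\min\sqbrac{\chi_A(i),\chi_A(j)}=\chi_A(i)\chi_A(j)$, which equals $1$ precisely when $\omega_i,\omega_j\in A$ and $0$ otherwise. Substituting into \eqref{eq55} yields
\begin{equation*}
\int\chi_A\,d\mu_n=\sum_{i,j}(\chihat_A)_{ij}D_{ij}^n=\sum\brac{D_{ij}^n\colon\omega_i,\omega_j\in A}=D_n(A,A)=\mu_n(A),
\end{equation*}
which both reproves the quoted property and confirms that $\chihat_A$ is the rank-one positive semi-definite matrix guaranteed by Theorem~\ref{thm51}. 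Combining this with the conclusion of Theorem~\ref{thm52} gives $\mu_n(A)=\rmtr(\chihat_A D^n)$, as required.

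There is essentially no obstacle here: the corollary is a one-line consequence of Theorem~\ref{thm52}, and the only point worth verifying is that the $q$-integral collapses to $\mu_n$ on characteristic functions, which the computation above makes transparent.
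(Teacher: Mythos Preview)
Your proposal is correct and follows exactly the same approach as the paper: apply Theorem~\ref{thm52} with $f=\chi_A$ and invoke the property $\int\chi_A\,d\mu_n=\mu_n(A)$ stated before Theorem~\ref{thm51}. The additional self-contained verification you supply is accurate but goes beyond what the paper includes.
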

\begin{proof}
It follows from Theorem~\ref{thm52} that
\begin{equation*}
\mu _n(A)=\int\chi _Ad\mu _n=\rmtr (\chihat _AD^n)\qedhere
\end{equation*}
\end{proof}

It is also interesting to note that $\chihat _A=\ket{\chi _A}\bra{\chi _A}$ so we can write
$\mu _n(A)=\rmtr\paren{\ket{\chi _A}\bra{\chi _A}D^n}= \bra{\chi_A} D^n \ket{\chi_A}$.  
More generally, we have
\begin{equation}         % equation (5.6)
\label{eq56}
  D_n(A,B) = \rmtr\paren{\ket{\chi _A}\bra{\chi _B}D^n} = \bra{\chi_B} D^n \ket{\chi_A}
\end{equation}
and $(A,B)\mapsto\ket{\chi _A}\bra{\chi _B}$ is a positive semi-definite 
operator-bimeasure. 
That is, 
it is an
operator-valued measure in each variable and $A_1,\ldots ,A_m\subseteq\Omega _n$,
$c_1,\ldots ,c_m\subseteq \complex$ 
%% $c_1,\ldots ,c_m\subseteq\cscript$ 
imply that
\begin{equation*}
\sum _{i,j}c_1\overline{c_j}\ket{\chi _{A_i}}\bra{\chi _{A_j}}\ge 0
\end{equation*}
Although $(f+g)^\wedge\ne\fhat +\ghat$ in general, the proof of the following lemma is straightforward.

\begin{lem}       % Lemma 5.4
\label{lem54}
If $f,g,h\colon\Omega _n\to\real$ have disjoint support, then
\begin{equation*}
(f+g+h)^\wedge =(f+g)^\wedge +(f+h)^\wedge +(g+h)^\wedge -\fhat -\ghat -\hhat
\end{equation*}
\end{lem}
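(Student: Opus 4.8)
The identity is an equality of $2^n\times 2^n$ matrices, and every operation appearing in it is \emph{entrywise}: by definition each entry $\hat p_{ij}$ of a hatted matrix depends only on the two numbers $p(i)$ and $p(j)$. Hence the plan is to fix a pair of indices $(i,j)$ and verify the corresponding scalar identity. The first step is to make the entrywise operation explicit. Writing $p=p^+-p^-$ with $p^\pm\ge 0$ and $p^+p^-=0$, a direct computation from $\hat p_{ij}=\min[p^+(i),p^+(j)]-\min[p^-(i),p^-(j)]$ gives $\hat p_{ij}=m(p(i),p(j))$, where
\[
m(x,y)=\begin{cases}\min(x,y)&x,y\ge 0,\\ \max(x,y)&x,y\le 0,\\ 0&xy<0.\end{cases}
\]
Two properties of $m$ are all I will need: it is symmetric, $m(x,y)=m(y,x)$, and it satisfies $m(x,0)=0$ for every $x$.

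Next I would invoke the disjoint-support hypothesis: at each index $k$ at most one of $f(k),g(k),h(k)$ is nonzero. The scalar identity at $(i,j)$ is symmetric under permutations of $\brac{f,g,h}$ and, since $m$ is symmetric, under the interchange $i\leftrightarrow j$. Using these symmetries I would reduce to a short list of cases, organized by which of the three supports (if any) contain $i$ and $j$: (1)~neither $i$ nor $j$ lies in any support; (2)~exactly one of $i,j$ lies in a support; (3)~$i,j$ lie in the same support; (4)~$i,j$ lie in two different supports.

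Each case is then a one-line check driven by $m(x,0)=0$. For instance, in case~(4) with $i\in\mathrm{supp}(f)$ and $j\in\mathrm{supp}(g)$, every argument of $\fhat_{ij}$, $\ghat_{ij}$, $\hhat_{ij}$, $(f+h)^\wedge_{ij}$ and $(g+h)^\wedge_{ij}$ contains a zero entry, so all five of those terms vanish, while $(f+g)^\wedge_{ij}=m(f(i),g(j))$ equals the left-hand side $(f+g+h)^\wedge_{ij}=m(f(i),g(j))$; the remaining cases collapse just as quickly (in cases~(1) and~(2) both sides are $0$, and in case~(3) only the three $f$-terms survive and cancel down to $m(f(i),f(j))$). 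Since the verified relation holds for each $(i,j)$, it holds entrywise, which is the asserted matrix identity.

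The only place demanding care is the very first step: getting the scalar form $m$ of the hat operation right, since the four sign-combinations in $p^+,p^-$ must be tracked correctly, and confirming that the case list in the reduction is genuinely exhaustive. Once $m(x,0)=0$ is in hand, the disjointness hypothesis does all the real work by forcing the cross terms to vanish, which is exactly what makes this a grade-$2$ (rather than linear) additivity statement.
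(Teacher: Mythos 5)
Your proof is correct: the scalar form $\hat p_{ij}=m\bigl(p(i),p(j)\bigr)$ does follow from the canonical decomposition $p=p^+-p^-$ exactly as you compute (including the agreement of the overlapping branches when an argument is zero), your four cases are exhaustive given the permutation and $i\leftrightarrow j$ symmetries, and each case reduces to $m(x,0)=0$ as claimed. The paper gives no argument at all---it states only that ``the proof of the following lemma is straightforward''---and your entrywise case analysis is precisely the routine verification it leaves to the reader.
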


Applying Lemma~\ref{lem54} and Theorem~\ref{thm52} gives the following result.

\begin{cor}       % Corollary 5.5
\label{cor55}
If $f,g,h\colon\Omega _n\to\real$ have disjoint support, then
\begin{align*}
\int (f+g+h)d\mu _n&=\int (f+g)d\mu _n+\int (f+h)d\mu _n+\int (g+h)d\mu _n\\
  &\qquad -\int fd\mu _n-\int gd\mu _n-\int hd\mu _n
\end{align*}
\end{cor}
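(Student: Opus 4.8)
The plan is to reduce this additivity-type statement to the purely algebraic hat-matrix identity of Lemma~\ref{lem54}, using the trace representation of the $q$-integral from Theorem~\ref{thm52} as the bridge between the two.

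First I would apply Theorem~\ref{thm52} to the left-hand side, writing $\int(f+g+h)\,d\mu_n=\rmtr\bigl((f+g+h)^\wedge D^n\bigr)$. Next, since $f,g,h$ have disjoint support, I would invoke Lemma~\ref{lem54} to replace the single hat matrix $(f+g+h)^\wedge$ by the combination $(f+g)^\wedge+(f+h)^\wedge+(g+h)^\wedge-\fhat-\ghat-\hhat$ inside the trace. Because the trace is linear and right multiplication by the fixed density matrix $D^n$ is linear, I would distribute the trace across this six-term combination, obtaining $\rmtr\bigl((f+g)^\wedge D^n\bigr)+\rmtr\bigl((f+h)^\wedge D^n\bigr)+\rmtr\bigl((g+h)^\wedge D^n\bigr)-\rmtr(\fhat D^n)-\rmtr(\ghat D^n)-\rmtr(\hhat D^n)$. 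Finally I would apply Theorem~\ref{thm52} in reverse to each of these six traces, converting each $\rmtr(\widehat{\phi}\,D^n)$ back into $\int\phi\,d\mu_n$, which is precisely the right-hand side of the asserted identity.

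There is no serious obstacle here, since both ingredients are already in hand; the argument is simply linearity of the trace sandwiched between two applications of Theorem~\ref{thm52}. The one point I would be careful to record is that the disjoint-support hypothesis enters only through Lemma~\ref{lem54} --- it is exactly the condition that makes the hat-matrix identity hold, whereas $(f+g)^\wedge\ne\fhat+\ghat$ in general. In particular, I would emphasize that the conclusion does \emph{not} follow from any additivity of the $q$-integral (which fails in general), but is instead the image under the linear trace functional $\widehat{\phi}\mapsto\rmtr(\widehat{\phi}\,D^n)$ of the grade-2 additivity enjoyed by the hat matrices in Lemma~\ref{lem54}.
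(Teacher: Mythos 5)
Your proposal is correct and is exactly the paper's argument: the paper proves Corollary~\ref{cor55} by ``applying Lemma~\ref{lem54} and Theorem~\ref{thm52},'' i.e.\ the hat-matrix identity combined with the trace representation $\int f\,d\mu_n=\rmtr(\fhat D^n)$ and linearity of the trace, just as you spell out. Your closing remark that the disjoint-support hypothesis enters only through Lemma~\ref{lem54}, and that the result is not a consequence of additivity of the $q$-integral, is an accurate and worthwhile clarification.
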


The next theorem can be used to simplify computations.

\begin{thm}       % Theorem 5.6
\label{thm56}The eigenvalues of $D^n$ are $1/2$ with multiplicity 2 and 0 with multiplicity $2^n-2$. The unit eigenvectors corresponding to $1/2$ are $\psi _0^n,\psi _1^n$ where
\begin{equation*}
\psi _0^n=\frac{1}{2^{(n-1)/2}}\left[\begin{matrix}\noalign{\smallskip}i^{c_n(0)}\\\noalign{\smallskip}
  0\\\noalign{\smallskip}i^{c_n(2)}\\\noalign{\smallskip}0\\\noalign{\smallskip}\vdots\\\noalign{\smallskip}
  i^{c_n(2^n-2)}\\\noalign{\smallskip}0\\\noalign{\smallskip}\end{matrix}\right]\,,\qquad
  \psi _1^n=\frac{1}{2^{(n-1)/2}}\left[\begin{matrix}\noalign{\smallskip}0\\\noalign{\smallskip}
  i^{c_n(1)}\\\noalign{\smallskip}0\\\noalign{\smallskip}i^{c_n(3)}\\\noalign{\smallskip}0\\\noalign{\smallskip}
  \vdots\\\noalign{\smallskip}0\\\noalign{\smallskip}i^{c_n(2^n-1)}\\\noalign{\smallskip}\end{matrix}\right]
\end{equation*}
\end{thm}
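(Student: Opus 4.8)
The plan is to exhibit $D^n$ explicitly as a sum of two mutually orthogonal rank-one projections, from which the stated spectrum is immediate. The starting point is the closed form \eqref{eq23}, namely $D_{jk}^n=\tfrac{1}{2^n}\,i^{[c_n(j)-c_n(k)]}p_{jk}$, together with the observation that the parity recorded by $p_{jk}$ is exactly the parity of the final bit of the index: an index written as $j=0\alpha_1\cdots\alpha_n$ is even precisely when $\alpha_n=0$, so $p_{jk}=1$ if and only if $j$ and $k$ are both even or both odd. First I would set $a(j)=i^{c_n(j)}$, so that $\overline{a(k)}=i^{-c_n(k)}$ and $D_{jk}^n=\tfrac{1}{2^n}\,a(j)\overline{a(k)}\,p_{jk}$; this is the same factorization already used in the proof of Lemma~\ref{lem31}.

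With $\psi_0^n,\psi_1^n$ as in the statement --- the normalized vectors carrying the values $a(j)$ on the even indices and on the odd indices respectively, and zero elsewhere --- I would compute the $(j,k)$ entry of $\ket{\psi_0^n}\bra{\psi_0^n}$, which equals $\tfrac{1}{2^{n-1}}a(j)\overline{a(k)}$ when $j,k$ are both even and $0$ otherwise, and likewise the entry of $\ket{\psi_1^n}\bra{\psi_1^n}$, which equals $\tfrac{1}{2^{n-1}}a(j)\overline{a(k)}$ when $j,k$ are both odd and $0$ otherwise. Adding these and inserting the factor $1/2$ reproduces $\tfrac{1}{2^n}a(j)\overline{a(k)}$ exactly on the same-parity entries and $0$ on the opposite-parity entries, so that
\begin{equation*}
  D^n=\tfrac12\ket{\psi_0^n}\bra{\psi_0^n}+\tfrac12\ket{\psi_1^n}\bra{\psi_1^n}.
\end{equation*}

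Next I would note that $\psi_0^n$ and $\psi_1^n$ have disjoint supports (even versus odd indices), hence are orthogonal, and that each has exactly $2^{n-1}$ nonzero entries of modulus $2^{-(n-1)/2}$, hence unit norm; the normalizing constant $2^{(n-1)/2}$ in the statement is chosen precisely for this. The displayed identity is therefore already a spectral decomposition of the self-adjoint operator $D^n$ in terms of an orthonormal pair. It follows at once that $1/2$ is an eigenvalue with eigenspace $\rmspan\brac{\psi_0^n,\psi_1^n}$ of dimension $2$, that every vector orthogonal to both $\psi_0^n$ and $\psi_1^n$ lies in the kernel, and hence that $0$ is an eigenvalue of multiplicity $2^n-2$.

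The only real content is the entrywise verification of this rank-one decomposition, and the one place to be careful is the parity bookkeeping: matching the condition $p_{jk}=1$ with ``both indices even or both odd,'' and checking that the supports of $\psi_0^n,\psi_1^n$ are exactly the even and odd index sets as displayed. I do not expect a genuine obstacle here, since the factorization $D_{jk}^n=\tfrac{1}{2^n}a(j)\overline{a(k)}p_{jk}$ does all the work; as a consistency check, the decomposition recovers the trace and positivity of Theorem~\ref{thm21}, since $\rmtr D^n=\tfrac12+\tfrac12=1$ and both coefficients are nonnegative.
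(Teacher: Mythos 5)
Your proof is correct, and it takes a genuinely different (and slightly more economical) route than the paper's. The paper verifies the eigenvalue equation directly: using \eqref{eq23} it computes $D^n\psi_0^n(j)$ entry by entry, getting $0$ for $j$ odd and, for $j$ even, a sum of $2^{n-1}$ identical terms $i^{c_n(j)}$ that collapses to $\tfrac12\psi_0^n(j)$; it then needs a second, separate argument for the multiplicity of $0$, namely that every even-indexed column of $D^n$ is a scalar multiple of $\psi_0^n$ and every odd-indexed column a multiple of $\psi_1^n$, so that $\rmrange (D^n)=\rmspan\brac{\psi_0^n,\psi_1^n}$ and the kernel has dimension $2^n-2$. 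You instead verify entrywise the rank-one decomposition $D^n=\tfrac12\ket{\psi_0^n}\bra{\psi_0^n}+\tfrac12\ket{\psi_1^n}\bra{\psi_1^n}$ from the factorization $D^n_{jk}=\tfrac{1}{2^n}a(j)\overline{a(k)}\,p_{jk}$, and read off the entire spectrum at once from the orthonormality of the pair (disjoint supports; $2^{n-1}$ entries of modulus $2^{-(n-1)/2}$ each, so unit norm). Notably, your displayed identity is exactly the paper's equation \eqref{eq57}, which the paper deduces \emph{from} Theorem~\ref{thm56} afterwards; you reverse the logical order, proving \eqref{eq57} first and obtaining the theorem as an immediate corollary. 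What your approach buys is that a single entrywise identity replaces both the eigenvector computation and the column-span argument, with positivity, $\rmtr D^n=1$, and the spectrum all dropping out simultaneously; what the paper's approach buys is that it never has to guess the global projection structure, only check the action of $D^n$ on two candidate vectors and on the standard basis. Your parity bookkeeping is also sound: since $\omega_j=j$ in binary, $j$ is even precisely when the final bit $\alpha_n=0$, so $p_{jk}=1$ exactly on same-parity index pairs, matching the supports of $\psi_0^n$ and $\psi_1^n$.
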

\begin{proof}
Applying \eqref{eq23} we have for $j$ odd that
\begin{equation*}
D^n\psi _0^n(j)=0=\tfrac{1}{2}\,\psi _0^n(j)
\end{equation*}
and for $j$ even that
\begin{align*}
D^n\psi _0^n(j)&=\frac{1}{2^{(3n-1)/2}}\,\sum\brac{i^{\sqbrac{c_n(j)-c_n(k)}}i^{c_n(k)}\colon k\text{ even}}\\
  &=\frac{1}{2^{(3n-1)/2}}\,i^{c_n(j)}2^{n-1}=\frac{1}{2}\,\frac{i^{c_n(j)}}{2^{(n-1)/2}}=\frac{1}{2}\,\psi _0^n(j)
\end{align*}
Hence, $D^n\psi _0=\tfrac{1}{2}\psi _0$ and a similar argument shows that $D^n\psi _1=\tfrac{1}{2}\psi _1$. Thus, $1/2$ is an eigenvalue with unit eigenvectors $\psi _0^n,\psi _1^n$. Now the $k$th column of $D_n$ for $k>0$ and $k$ even is the vector
\begin{align*}
\left[\tfrac{1}{2^n}\,\right.&\left.i^{-c_n(k)}i^{c_n(j)}p_{jk},j=0,1,\ldots ,2^n-1\right]\\
&=\frac{i^{-c_n(k)}}{2^{(n+1)/2}}\,\sqbrac{\frac{1}{2^{(n-1)/2}}i^{c_n(j)}p_{jk},j=1,2,\ldots ,2^n-1}\\
&=\frac{i^{-c_n(k)}}{2^{(n+1)/2}}\,\psi _0^n
\end{align*}
Thus, the $k$th column of $D^n$ for $k>0$ and even is a multiple of $\psi _0^n$ and similarly, the $k$th column of $D^n$ for $k>1$ and odd is a multiple of $\psi _1^n$. Hence, the range of $D^n$ is generated by $\psi _0^n$ and
$\psi _1^n$. 
Thus, $\rmnull (D^n)=\rmspan\brac{\psi _0^n,\psi _1^n}^\perp$ 
%% is $0$ 
so $0$ is an eigenvalue of $D^n$ with multiplicity $2^n-2$.
\end{proof}

It follows from Theorem~\ref{thm56} that
\begin{equation}         % equation (5.7)
\label{eq57}
  D^n=\tfrac{1}{2}\,\ket{\psi _0^n}\bra{\psi _0^n}+\tfrac{1}{2}\ket{\psi _1^n}\bra{\psi _1^n}
\end{equation}
Applying \eqref{eq56} and \eqref{eq57} gives
\begin{align*}
  D_n(A,B)&=\tfrac{1}{2}\,\elbows{\chi _A,\psi _0^n}\elbows{\psi _0^n,\chi _B}
  +\tfrac{1}{2}\,\elbows{\chi _A,\psi _1^n}\elbows{\psi _1^n, \chi _B}\\
\intertext{and}
\mu _n(A)&=\tfrac{1}{2}\,\ab{\elbows{\chi _A,\psi _0^n}}^2+\tfrac{1}{2}\,\ab{\elbows{\chi _A,\psi _1^n}}^2
\end{align*}
Also, if $f\colon\Omega _n\to\real$, then by Theorem~\ref{thm52} we have
\begin{equation}         % equation (5.8)
\label{eq58}
\int fd\mu _n=\rmtr (\fhat D_n)=\tfrac{1}{2}\,\elbows{\fhat\psi _0^n,\psi _0^n}
  +\tfrac{1}{2}\elbows{\fhat\psi _1^n,\psi _1^n}
\end{equation}
We close by computing some expectations. Let $f_n\colon\Omega _n\to\real ^+$ be the random variable given by
\begin{equation*}
f_n(\omega _i)=\text{ number of }1\text{s in }\omega _i
\end{equation*}
%% We 
%% might
%% think of $f_n$ and $c_n$ as conjugate variables analogous to position
%% and momentum in quantum mechanics. 
The proof of the next result is
similar to that of Lemma~\ref{lem45}. 

\begin{lem}       % Lemma 5.7
\label{lem57}
For $n\in\positive$, $j=0,1,\ldots ,2^n-1$, the function $f_n(j)$ satisfies
\begin{equation*}
f_{n+1}(j+2^n)=f_n(j)+1
\end{equation*}
\end{lem}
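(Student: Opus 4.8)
The plan is to argue directly from the binary expansions, in exact parallel to the proof of Lemma~\ref{lem45}. The first step is to notice that $f_m$ is nothing but the ordinary binary digit-sum: since $f_m(\omega_i)$ counts the $1$'s occurring in the string $\omega_i$ and the leading bit $\alpha_0$ is always $0$, the value $f_m(j)$ equals the number of $1$'s in the binary representation of the integer $j$, and this number does not depend on which $m\ge\log_2 j$ one uses to pad the string with leading zeros. Thus proving the lemma amounts to comparing the digit-sums of $j$ and of $j+2^n$.

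The second, and central, step is to make explicit the effect of the map $j\mapsto j+2^n$ on these representations. Writing the $n$-path representation of $j$ as $0\,a_1a_2\cdots a_n$, so that $j=\sum_{k=1}^n a_k 2^{\,n-k}$, I would observe that because $0\le j\le 2^n-1$ the digit of weight $2^n$ in $j$ is $0$. Hence adding $2^n$ produces no carry, and matching coefficients of the powers of $2$ shows that the $(n+1)$-path representation of $j+2^n$ is
\[
   j+2^n = 0\,1\,a_1a_2\cdots a_n .
\]
Comparing the string $0\,a_1\cdots a_n$ for $j$ with the string $0\,1\,a_1\cdots a_n$ for $j+2^n$, the latter contains exactly one extra $1$, namely the freshly set bit of weight $2^n$, while all remaining digits $a_1,\ldots,a_n$ are simply shifted one place and left unchanged. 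Counting $1$'s therefore gives $f_{n+1}(j+2^n)=f_n(j)+1$, as claimed.

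The only point that requires care --- and it is the analogue of the position-switch bookkeeping in Lemma~\ref{lem45} --- is the verification that adding $2^n$ merely prepends the digit $1$ without disturbing the lower-order bits. This is precisely where the hypothesis $j\le 2^n-1$ enters: it guarantees the absence of a carry, so that no $1$ is destroyed by the addition. Apart from this observation the argument is routine, which is why the statement is flagged as having a proof ``similar to that of Lemma~\ref{lem45}.''
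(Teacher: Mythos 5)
Your proof is correct and is precisely the argument the paper intends: the paper gives no explicit proof, stating only that it is ``similar to that of Lemma~\ref{lem45}'', and your direct bit-manipulation --- observing that since $0\le j\le 2^n-1$ the addition of $2^n$ produces no carry and simply prepends a $1$ to the path string, so the count of $1$'s increases by exactly one --- is the natural fleshing-out of that remark. The key hypothesis ($j\le 2^n-1$ ruling out a carry) is correctly identified and used.
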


\begin{exam}{16}                    % Example 16
Since $f_1=(0,1)$, it follows from Lemma~\ref{lem57} that $f_2=(0,1,1,2)$, $f_3=(0,1,1,2,1,2,2,3)$ and
\begin{equation*}
f_4=(0,1,1,2,1,2,2,3,1,2,2,3,2,3,3,4)
\end{equation*}
\end{exam}

\begin{exam}{17}                    % Example 17
Applying \eqref{eq58} we have that
\begin{align*}
\int f_1d\mu _1&=1/2,\quad\int f_2d\mu _2=3/2,\quad\int f_3d\mu _3=2\\
\intertext{and}
\int c_1d\mu _1&=1/2,\quad\int c_2d\mu _2=3/2,\quad\int c_3d\mu _3=3
\end{align*}
Unfortunately, it appears to be difficult to find general formulas for $\int f_nd\mu _n$ and $\int c_nd\mu _n$.
\end{exam}


\begin{thebibliography}{99}
% ref 1
\bibitem{gfrjr02}
Graham Brightwell, {H. Fay Dowker}, {Raquel S. Garc{\'\i}a}, {Joe Henson} and {Rafael D.~Sorkin},
``General Covariance and the `Problem of Time' in a Discrete Cosmology,'' in K.G.~Bowden, Ed.,     
 {\it Correlations}, Proceedings of the ANPA 23 conference, held August 16-21, 2001,
 Cambridge, England (Alternative Natural Philosophy Association, London, 2002), pp 1-17,
{gr-qc/0202097}
{http://www.perimeterinstitute.ca/personal/rsorkin/some.papers/} 
% ref 2
\bibitem{gfrjr03}
Graham Brightwell, Fay Dowker, Raquel S.~Garc{\'\i}a, Joe Henson and Rafael D.~Sorkin,
``Observables in Causal Set Cosmology,''
\journaldata {Phys. Rev.~D} {67} {084031} {2003},
{gr-qc/0210061}
{http://www.perimeterinstitute.ca/personal/rsorkin/some.papers/} 
% ref 3
\bibitem{gt0906}Yousef Ghazi-Tabatabai, 
Quantum measure: A new interpretation, arXiv: quant-ph (0906:0294).
% ref 4
\bibitem{djso10}Fay Dowker, Steven Johnston and Rafael D. Sorkin, 
Hilbert spaces from path integrals, arXiv: quant-ph (1002:0589), 2010.
% ref 5
\bibitem{djsu10}Fay Dowker, Steven Johnston, Sumati Surya,
On extending the quantum measure, arXiv: quant-ph (1002:2725), 2010.
% ref 6
\bibitem{gud09}S.~Gudder, Quantum measure and integration theory, \textit{J. Math. Phys.} \textbf{50},
123509 (2009).
% ref 7
\bibitem{gud101}S.~Gudder, Quantum measure theory, \textit{Math. Slovaca} \textbf{60}, 681--700 (2010).
% ref 8
\bibitem{gud102}S.~Gudder, An anhomomorphic logic for quantum mechanics, \textit{J. Phys. A}
\textbf{43}, 095302 (2010).
% ref 9
\bibitem{gud103}S.~Gudder, Quantum reality filters, \textit{J. Phys. A} \textbf{43}, 48530 (2010).
% ref 10
\bibitem{gud104}S.~Gudder, Hilbert space representations of decoherence functionals and quantum measures,
arXiv: quant-ph (1011.1694) 2010.
% ref 11
\bibitem{mocs05}Xavier Martin, Denjoe O'Connor and Rafael D.~Sorkin,
The Random Walk in Generalized Quantum Theory,
\textit{Physic Rev D} \textbf{71}, 024029 (2005).
% ref 12
\bibitem{sor94}Rafael D.~Sorkin,
Quantum mechanics as quantum measure theory, \textit{Mod. Phys. Letts.~A} \textbf{9} (1994), 3119--3127.
% ref 13
\bibitem{sor071}Rafael D. Sorkin,
Quantum dynamics without the wave function, \textit{J.~Phys.~A} \textbf{40} (2007), 3207-3231.
% ref 14
\bibitem{sor072}Rafael D. Sorkin,
An exercise in ``anhomomorphic logic'', \textit{J.~Phys.: Conference Series} (JPCS) 67,012018 (2007).
% ref 15
\bibitem{sor111}Rafael~D.~Sorkin,
``Toward a `fundamental theorem of quantal measure theory'\.'' (to appear)
{http://arxiv.org/abs/1104.0997}, 
{http://www.perimeterinstitute.ca/personal/rsorkin/some.papers
/141.fthqmt.pdf} 
% ref 16
\bibitem{capetown}Rafael~D.~Sorkin, 
``Logic is to the quantum as geometry is to gravity,'' in G.F.R. Ellis, J. Murugan and A. Weltman (eds),
 {\it Foundations of Space and Time} (Cambridge University Press), (to appear)
 {arXiv:1004.1226 [quant-ph]},
 {http://www.perimeterinstitute.ca/personal/rsorkin/some.papers/} 


\end{thebibliography}
\end{document}